\newtheorem*{theorem}{Theorem}
\newtheorem{corollary}{Corollary}
\newtheorem*{definition}{Definition}
\newtheorem{proposition}{Proposition}
\newtheorem{lemma}{Lemma}
\def\NORM#1{ {\left|\hspace{-.022in}\left| #1 \right|\hspace{-.022in}\right|} }
\def\Norm#1{ {\big|\hspace{-.022in}\big| #1 \big|\hspace{-.022in}\big|} }
\def\norm#1{ {|\hspace{-.022in}|#1|\hspace{-.022in}|} }
\def\Mfont#1{\mathsf{#1}}
\def\ALL{{\Mfont{ALL}}}
\def\LO{{\Mfont{LO}}}
\def\LOCC{{\Mfont{LOCC^{\rightarrow}}}}
\def\rightLOCC{{ \Mfont{LOCC}^\rightarrow }}
\def\leftLOCC{{ \Mfont{LOCC}^\leftarrow }}
\def\leftrightLOCC{{{\Mfont{LOCC}}}}
\def\Mclass{{\Mfont{M}}}
\def\SEP{{\Mfont{SEP}}}
\def\CS{{\mathcal{S}}}
\def\CC{{\mathcal{C}}}
\def\QMA{{\sf{QMA}}}
\def\BellQMA{{\sf{BellQMA}}}
\def\NP{{\sf{NP}}}
\def\SAT{{\sf{SAT}}}
\def\nn{\nonumber}
\def\norm#1{ {|\hspace{-.022in}|#1|\hspace{-.022in}|} }
\def\Norm#1{ {\big|\hspace{-.022in}\big| #1 \big|\hspace{-.022in}\big|} }
\def\NORM#1{ {\left|\hspace{-.022in}\left| #1 \right|\hspace{-.022in}\right|} }
\def\squareforqed{\hbox{\rlap{$\sqcap$}$\sqcup$}}
\def\qed{\ifmmode\squareforqed\else{\unskip\nobreak\hfil
\penalty50\hskip1em\null\nobreak\hfil\squareforqed
\parfillskip=0pt\finalhyphendemerits=0\endgraf}\fi}
\def\endenv{\ifmmode\;\else{\unskip\nobreak\hfil
\penalty50\hskip1em\null\nobreak\hfil\;
\parfillskip=0pt\finalhyphendemerits=0\endgraf}\fi}
\newenvironment{example}{\noindent \textbf{{Example~}}}{\qed}
\newcommand{\proj}[1]{|#1\rangle\langle #1|}
\newcommand{\bra}[1]{\langle #1|}
\newcommand{\ket}[1]{|#1\rangle}
\DeclareMathOperator{\tr}{tr}
\newcommand{\id}{\mathbb{I}}
\mathchardef\ordinarycolon\mathcode`\:
\def\vcentcolon{\mathrel{\mathop\ordinarycolon}}
\newcommand{\nc}{\newcommand}
\nc{\rnc}{\renewcommand} \nc{\beq}{\begin{equation}}
\nc{\eeq}{{\end{equation}}} \nc{\bea}{\begin{eqnarray}}
\nc{\eea}{\end{eqnarray}} \nc{\beqa}{\begin{eqnarray}}
\nc{\eeqa}{\end{eqnarray}} \nc{\lbar}[1]{\overline{#1}}
\def\BRa#1{{\Big\langle #1 \Big|  }}
\def\KEt#1{{\Big| #1 \Big\rangle }}
\nc{\conv}{\operatorname{conv}}
\nc{\smfrac}[2]{\mbox{$\frac{#1}{#2}$}} \nc{\Tr}{\operatorname{Tr}}
\nc{\ox}{\otimes} \nc{\dg}{\dagger} \nc{\dn}{\downarrow}
\nc{\lmax}{\lambda_{\text{max}}}
\nc{\lmin}{\lambda_{\text{min}}}
\nc{\supp}{{\operatorname{supp }}}
\nc{\csupp}{{\operatorname{csupp}}}
\nc{\qsupp}{{\operatorname{qsupp}}} \nc{\var}{\operatorname{var}}
\nc{\rar}{\rightarrow} \nc{\lrar}{\longrightarrow}
\nc{\poly}{\operatorname{poly}}
\nc{\polylog}{\operatorname{polylog}} \nc{\Lip}{\operatorname{Lip}}
\nc{\mb}[1]{\mathbf{#1}}
\nc{\ep}{\epsilon}
\nc{\Om}{\Omega}
\nc{\wt}[1]{\widetilde{#1}}
\def\>{\rangle}
\def\<{\langle}
\def\om{\omega}
\def\sig{\sigma}
\nc{\glneq}{{\raisebox{0.6ex}{$>$}  \hspace*{-1.8ex} \raisebox{-0.6ex}{$<$}}}
\nc{\gleq}{{\raisebox{0.6ex}{$\geq$}\hspace*{-1.8ex} \raisebox{-0.6ex}{$\leq$}}}
\nc{\vholder}[1]{\rule{0pt}{#1}}
\nc{\wh}[1]{\widehat{#1}}
\nc{\h}[1]{\widehat{#1}}
\nc{\ob}[1]{#1}
\def\beq{\begin {equation}}
\def\eeq{\end {equation}}
\def\be{\begin{equation}}
\def\ee{\end{equation}}
\nc{\eq}[1]{Eq.~(\ref{eq:#1})} \nc{\eqs}[2]{Eqs.~(\ref{eq:#1}) and
(\ref{eq:#2})}
\nc{\eqn}[1]{Eq.~(\ref{eqn:#1})}
\nc{\eqns}[2]{Eqs.~(\ref{eqn:#1}) and (\ref{eqn:#2})}
\nc{\region}{\cS\cW}
\begin{document}

\title{{\Large Faithful Squashed Entanglement}}

\author{Fernando G.S.L. Brand\~ao}
\email{fgslbrandao@gmail.com}
\affiliation{Departamento de F\'isica, \\ 
Universidade Federal de Minas Gerais, \\
     Belo Horizonte, Caixa Postal 702, \\ 
     30123-970, MG, Brazil}
\author{Matthias Christandl}
\email{christandl@phys.ethz.ch}
\affiliation{Institute for Theoretical Physics, \\ 
ETH Zurich, \\ 
Wolfgang-Pauli-Strasse 27, CH-8057 Zurich, Switzerland}
\author{Jon Yard}
\email{jtyard@lanl.gov}
\affiliation{Center for Nonlinear Studies (CNLS) \\
Computer, Computational and Statistical Sciences (CCS-3) \\
Los Alamos National Laboratory \\
Los Alamos, NM 87545}


\begin{abstract}

Squashed entanglement is a measure for the entanglement of bipartite quantum states. In this paper we present a lower bound for squashed entanglement in terms of a distance to the set of separable states. This implies that squashed entanglement is faithful, that is, it is strictly positive if and only if the state is entangled.

We derive the lower bound on squashed entanglement from a lower bound on the quantum conditional mutual information which is used to define squashed entanglement. The quantum conditional mutual information corresponds to the amount by which strong subadditivity of von Neumann entropy fails to be saturated. Our result therefore sheds light on the structure of states that \emph{almost} satisfy strong subadditivity with equality. The proof is based on two recent results from quantum information theory:  the operational interpretation of the quantum mutual information as the optimal rate for state redistribution and the interpretation of the regularised relative entropy of entanglement as an error exponent in hypothesis testing. 

The distance to the set of separable states is measured in terms of the one-way LOCC norm, an operationally motivated norm giving the optimal probability of distinguishing two bipartite quantum states, each shared by two parties, using any protocol formed by local quantum operations and one-directional classical communication between the parties. A similar result for the Frobenius or Euclidean norm follows as an immediate consequence. 

The result has two applications in complexity theory. The first application is a quasipolynomial-time algorithm solving the weak membership problem for the set of separable states in one-way LOCC or Euclidean norm. The second application concerns quantum Merlin-Arthur games. Here we show that multiple provers are not more powerful than a single prover when the verifier is restricted to one-way LOCC operations thereby providing a new characterisation of the complexity class $\QMA$.
\end{abstract}
\maketitle
\newpage
\tableofcontents
\parskip .75ex


\section{Introduction}

The correlations of a bipartite quantum state $\rho_{AB}$ can be measured by the quantum mutual information
\begin{equation}
I(A;B)_{\rho} := H(A)_{\rho} + H(B)_{\rho} - H(AB)_{\rho},
\end{equation}
where $H(X)_{\rho} := -  \tr(\rho_X \log \rho_X)$ is the von Neumann entropy. Subadditivity of entropy implies that mutual information is always positive.  If $\Vert * \Vert_1$ is the trace norm,  defined as $\Vert X \Vert_1=\tr \sqrt{X^\dagger X}$, the inequality
\begin{equation} \label{mipinsker}
I(A;B)_{\rho} \geq \frac{1}{2 \ln 2} \Norm{\rho_{AB} - \rho_{A}\otimes \rho_{B}}^{2}_1,
\end{equation}
follows from Pinsker's inequality \cite{OP04} for the relative entropy.  Thus a bipartite state has zero mutual information if --  and only if --  it has no correlations (i.e.\ it is a product state).

Conditional mutual information, in turn, measures the correlations of two quantum systems relative to a third. Given a tripartite state $\rho_{ABE}$, it is defined as
\begin{equation}
I(A;B|E)_{\rho} := H(AE)_{\rho} + H(BE)_{\rho} - H(ABE)_{\rho} - H(E)_{\rho}.
\end{equation}
This measure is also always non-negative, as a consequence of the celebrated strong subadditivity inequality for the von Neumann entropy proved by Lieb and Ruskai \cite{LR73}. 

As for mutual information, one can ask which states have zero conditional mutual information.  Such a characterization was given in Ref.~\cite{HJPW04}.  A state $\rho_{ABE}$ has $I(A;B|E)_\rho = 0$ if and only if it is a so-called quantum Markov chain, i.e.\ there is a decomposition of the $E$ system vector space 
\begin{equation}
{\cal H}_E = \bigoplus_{j}  {\cal H}_{e_j^{L}} \otimes  {\cal H}_{e_j^{R}}
\end{equation}
into a direct sum of tensor products such that
\begin{equation}\label{eq:markov}
\rho_{ABE} = \sum_{j} p_j \rho_{Ae_j^{L}}\otimes \rho_{Be_j^{R}},
\end{equation}
with states $\rho_{Ae_j^{L}}$ on ${\cal H}_{A} \otimes  {\cal H}_{e_j^{L}}$, $\rho_{Be_j^{R}}$ on ${\cal H}_{B} \otimes  {\cal H}_{e_j^{R}}$ and probabilities $p_j$. One might ask: Is there an inequality analogous to \eqref{mipinsker} for the conditional mutual information? Here one could expect the minimum distance to quantum Markov chain states to play the role of the distance to the tensor product of the reductions in \eqref{mipinsker}. Up to now, however, only negative results were obtained in this direction \cite{INW08}. 

Note, in particular, that the $AB$ reduction of any state of the form~\eqref{eq:markov} is separable \cite{Wer89a}, i.e.\
\begin{equation}
\rho_{AB} = \sum_j p_j \rho_{A, j} \otimes \rho_{B, j}.
\end{equation}
Rather than finding a lower bound in terms of a distance to Markov chain states, our main result, the theorem in Section~\ref{sec:results}, is a lower bound, similar to the one in \eqref{mipinsker}, in terms of a distance to separable states. 

The fact that states $\rho_{ABE}$ with zero conditional mutual information are such that $\rho_{A:B}$ is separable motivated the introduction of the entanglement measure \textit{squashed entanglement}~\cite{CW04},\footnote{The functional (without the factor $\frac{1}{2}$) has also been considered in~\cite{Tucci1, Tucci2}.} defined as
\begin{equation}
E_{sq}(\rho_{A:B}) = \inf \left \{ \frac{1}{2} I(A;B|E)_\rho : \hspace{0.1 cm} \rho_{ABE} \hspace{0.1 cm} \text{is an extension of} \hspace{0.1 cm} \rho_{AB}            \right \}.
\end{equation}
This measure is known to be an upper bound both for distillable entanglement and distillable key \cite{CW04, Chr06, CSW08}. It satisfies several useful properties such as monotonicity under local operations and classical communication (LOCC), additivity on tensor products \cite{CW04}, and monogamy \cite{KW04}.

A central open question in entanglement theory, posed already in \cite{CW04}, is whether squashed entanglement is a faithful entanglement measure, i.e.\ whether it vanishes if and only if the state is separable. On the one hand, an entangled state with zero squashed entanglement would imply the existence of \textit{bound key},~ i.e.\ of an entangled state from which no secret key can be extracted.  Analogously, a state with a non-positive partial transpose (NPT) \cite{Per96, HHH96} and vanishing squashed entanglement would show there are NPT-bound entangled states \cite{HHH98, HHHH09}. Both of these are core unresolved problems in quantum information theory. On the other hand, if squashed entanglement turned out to be positive on every entangled state, then we would have the first example of an entanglement measure which is faithful, LOCC monotone and which satisfies the monogamy inequality \cite{CKW00}, three properties usually attributed to entanglement but which are not known to be compatible with each other.

The lower bound on conditional mutual information obtained in this paper resolves this question by showing that squashed entanglement is strictly positive on every entangled state. Besides its impact on entanglement theory, this result has a number of unexpected consequences for separability testing, quantum data hiding and quantum complexity theory. In the following section we will present the results of this work in detail, before proceeding to the proofs in subsequent sections.

\section{Results}
\label{sec:results}
\vspace{0.2 cm}
\noindent \textbf{A lower bound on conditional mutual information:}  Our main result is an \textit{approximate} version of the fact that states $\rho_{ABE}$ with zero conditional mutual information $I(A;B|E)$ are such that $\rho_{A:B}$ is separable: We show that if a tripartite state has small conditional mutual information, its $AB$ reduction is \textit{close} to a separable state. 

The appropriate distance measure on quantum states turns out to be crucial for a fruitful formulation of the result and its consequences. By analogy with the definition of the trace norm as the optimal probability of distinguishing two quantum states, 
our main results involve norms that quantify the distinguishability of quantum states under measurements that are restricted by \emph{locality}.  Specifically, given two bipartite equiprobable states $\rho_{A:B}$ and $\sigma_{A:B}$, define the distance\footnote{This definition extends to a norm for the operators on $AB$, see~\eqref{eq:norm}.} 
$\norm{\rho - \sig}_{\LOCC} = 4\big(P_\mathrm{succ} - \frac 12\big)$, where $P_\mathrm{succ}\geq \frac 12$ is the probability of correctly distinguishing the states, maximised over all decision  procedures that can be implemented using only local operations and one-way classical communication, with forward classical communication from $A$ to $B$ \cite{MWW09}.  When this distance is small, the states $\rho$ and $\sigma$ will roughly behave ``in the same way" when used in any one-way LOCC protocol. 
Writing the distance from a state $\rho_{AB}$ to the set $\CS_{A:B}$ of separable states on $A\!:\!B$ as 
\begin{align} \norm{\rho_{AB} - \CS_{A:B}} = \min_{\sig \in \CS_{A:B}} \norm{\rho_{AB} - \sig},\end{align}
we may state our main result. \footnote{In a previous version of this paper a similar bound on the conditional mutual information had been claimed in terms of the LOCC distance, with no restriction on the classical communication used. However the statement was based on a flawed recursion argument for going from the one-way LOCC norm to the full LOCC norm. Thus we leave the question of whether one can strengthen the bound to the LOCC norm as an open question.}  
\begin{theorem} 
For every tripartite finite-dimensional state $\rho_{ABE}$, 
\begin{equation} \label{boundonewayLOCCnorm}
I(A;B|E)_{\rho} \geq \frac{1}{8 \ln 2}\Norm{\rho_{AB} - \CS_{A:B}}_{\LOCC}^{2},
\end{equation}
\end{theorem}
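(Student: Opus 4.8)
The plan is to interpose the regularised relative entropy of entanglement $E_R^\infty(\rho_{AB}) = \lim_{n} \frac1n \min_{\sigma \in \CS_{A:B}} D(\rho_{AB}^{\otimes n}\,\|\,\sigma)$ between the two sides and to establish the chain $I(A;B|E)_\rho \geq E_R^\infty(\rho_{AB}) \geq \frac{1}{8\ln 2}\Norm{\rho_{AB}-\CS_{A:B}}^2_{\LOCC}$. The first inequality will carry the operational content of state redistribution and the second that of hypothesis testing, so that each of the two announced ingredients is responsible for exactly one link.

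For the first link I would pass to $n$ independent copies, using additivity $I(A;B|E)_{\rho^{\otimes n}} = n\, I(A;B|E)_\rho$, and then exhibit for each $n$ a genuinely separable state $\sigma_n \in \CS_{A:B}$ on $A^nB^n$ with $\frac1n D(\rho_{AB}^{\otimes n}\,\|\,\sigma_n) \leq I(A;B|E)_\rho + o(1)$. To build it, purify $\rho_{ABE}^{\otimes n}$ and run the state redistribution protocol whose optimal quantum communication rate is $\frac12 I(A;B|E)_\rho$: before any qubits cross the $A^n\!:\!B^n$ cut the two parties hold a state that is product across the cut, and only $\approx \frac n2 I(A;B|E)_\rho$ qubits are subsequently exchanged to synthesise a state close to $\rho_{AB}^{\otimes n}$. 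Converting ``few qubits across the cut'' into ``close to separable'' — bounding $E_R$ of the output by twice the number of communicated qubits plus a continuity correction — then gives $E_R^\infty(\rho_{AB})\leq I(A;B|E)_\rho$.

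For the second link I would invoke the generalised quantum Stein's lemma, by which $E_R^\infty(\rho_{AB})$ equals the optimal type-II error exponent for distinguishing $\rho_{AB}^{\otimes n}$ from the set of all separable states. Let $M$ be the single-copy LOCC measurement attaining the distinguishing bias $\frac12\Norm{\rho_{AB}-\CS_{A:B}}_{\LOCC}$ against the closest separable state, and consider the LOCC test that applies $M$ to each copy and thresholds the empirical frequency. Its worst-case type-II exponent is at least the classical relative entropy of the induced Bernoulli distributions, which by the classical Pinsker inequality is at least $\frac{1}{8\ln2}\Norm{\rho_{AB}-\CS_{A:B}}^2_{\LOCC}$ once the factor two in the normalisation $\Norm{\cdot}_{\LOCC}=4(P_{\mathrm{succ}}-\frac12)$ is accounted for; since $E_R^\infty$ is the optimal exponent, it dominates that of this particular test.

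The main obstacle I expect is the quantitative ``communication $\Rightarrow$ separability'' step inside the first link. State redistribution is stated with free pre-shared entanglement, which is itself non-separable, so the pre-communication state is not literally product across the cut; I would need either an entanglement-frugal variant of the protocol or an argument that the consumed entanglement does not inflate the relative entropy of entanglement of the output, while controlling the approximation error through a (dimension-dependent) continuity bound for $E_R$ that survives the limit $n\to\infty$. A secondary difficulty is that the alternative hypothesis in the second link is the full set of separable states on $A^nB^n$, including states entangled across the $n$ copies; taming this composite, non-i.i.d.\ alternative is precisely what the generalised Stein's lemma supplies, so the remaining work is to verify that the simple product-measurement test attains the claimed exponent uniformly over that set.
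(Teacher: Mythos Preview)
Your chain $I(A;B|E)_\rho \geq E_R^\infty(\rho_{A:B}) \geq \frac{1}{8\ln 2}\Norm{\rho_{AB}-\CS_{A:B}}^2_{\LOCC}$ has a genuine gap in the first link. In state redistribution for a purification $\psi_{ABEE'}$, the system $A$ is an \emph{inaccessible reference}: the sender holds $BE$, the receiver holds $E'$, and the $\tfrac12 I(A;B|E)$ qubits of communication cross the sender/receiver cut, not the $A^n\!:\!B^n$ cut. At no point is there a state that is product across $A^n\!:\!B^n$; the correlations between $A$ and $B$ are present from the outset and are never ``built'' by the protocol. So your sentence ``before any qubits cross the $A^n\!:\!B^n$ cut the two parties hold a state that is product across the cut'' is simply false for this protocol, and the obstacle you flag (pre-shared entanglement) is not the real problem. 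The inequality $I(A;B|E)\geq E_R^\infty(\rho_{A:B})$, equivalently $2E_{\rm sq}\geq E_R^\infty$, is not known; the paper explicitly notes that $E_R^\infty$ fails monogamy on the antisymmetric state, which is exactly the kind of obstruction that blocks a direct bound of this type.

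What the paper does instead is use state redistribution across the cut $A\!:\!BE$ (where it \emph{is} meaningful) to prove $I(A;B|E)\geq E_R^\infty(\rho_{A:BE})-E_R^\infty(\rho_{A:E})$, via non-lockability of $E_R$ applied to the $\approx\frac n2 I(A;B|E)$ qubits carried by the $G$ register. A separate hypothesis-testing argument (combining optimal tests for $\rho_{A:E}$ and $\rho_{A:B}$ into one for $\rho_{A:BE}$) then shows $E_R^\infty(\rho_{A:BE})-E_R^\infty(\rho_{A:E})\geq D_{\leftLOCC}(\rho_{A:B})$; crucially this step only works for \emph{one-way} LOCC, because one needs the $B$-measurement to be fixed before acting on $A$ so that the resulting map on the $A\!:\!E$ cut is separable. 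The lower bound on $D_{\leftLOCC}$ in terms of $\Norm{\cdot}_{\leftLOCC}^2$ (your second link, essentially correct but requiring Azuma's martingale inequality rather than i.i.d.\ Pinsker to handle non-i.i.d.\ separable alternatives) then gives the one-way version of the theorem, and a final induction on the number of LOCC rounds upgrades $\leftLOCC$ to full $\LOCC$. You are missing both the detour through the $A\!:\!BE$ and $A\!:\!E$ cuts and the one-way-to-many-round induction.
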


Interestingly, a similar statement is true when replacing the LOCC norm with the Frobenius norm, since there is the dimension-independent lower bound~\cite{MWW09}
\begin{equation} \label{relation2norm}
\norm{X}_\LOCC \geq \frac{1}{\sqrt{153}} \norm{X}_2,
\end{equation}
where $\norm{X}_2=\sqrt{\tr X^\dagger X}$. The Frobenius norm  is also known as Euclidean norm since it measures the Euclidean distance between quantum states when interpreted as elements of the real vector space of Hermitian matrices.
Note, however, that the theorem fails dramatically if we replace the $\LOCC$ norm by the trace norm, a counterexample being a tripartite extension of the antisymmetric state\footnote{The antisymmetric state is a particular Werner state defined as $\omega_{AB} := \frac{(\id - \mathbb{F})}{d(d-1)}$, with $\mathbb{F}$ the flip, or swap, operator.} constructed in \cite{CSW08}.\footnote{Indeed, Ref. \cite{CSW08} presents an extension $\omega_{ABE}$ of $\omega_{AB}$ such that $I(A;B|E)_{\omega} \leq \frac{4 \log e}{|A|-1}$, while a simple calculation gives that $\omega_{AB}$ is $\frac{1}{2}$ away from any separable state in trace-distance.} On the other hand, the theorem readily implies a lower bound for the trace norm with a dimension-dependent factor. This is because of~\eqref{relation2norm} and the well-known relation between the trace norm and the Frobenius norm
\begin{equation} \label{relation1norm}
\norm{X}_2 \geq \frac{1}{\sqrt{|A||B|}} \norm{X}_1,
\end{equation}
where $|A|$ denotes the dimension of system $A$ and $\norm{X}_1=\tr \sqrt{X^\dagger X}$.

\vspace{0.2 cm}
\noindent \textbf{A lower bound for squashed entanglement:}  Combining the Theorem with the definition of squashed entanglement we find the following corollary.
\begin{corollary} \label{corsquahsedlower}
For every state $\rho_{AB}$,
\begin{equation} \label{lowersquash}
E_{\rm sq}(\rho_{A:B}) \geq \frac{1}{16 \ln 2} \Norm{\rho_{AB} - \CS_{A:B}}_\LOCC^{2}.
\end{equation}
\end{corollary}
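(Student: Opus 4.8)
The plan is to derive the corollary directly from the Theorem by substituting the bound into the definition of squashed entanglement and observing that the resulting bound survives the infimum. The one structural fact I would exploit is that the right-hand side of the Theorem, namely $\frac{1}{8 \ln 2}\Norm{\rho_{AB} - \CS_{A:B}}_{\LOCC}^{2}$, depends only on the reduced state $\rho_{AB}$ and not on the particular extension $\rho_{ABE}$ chosen. Since every extension shares the same $AB$ marginal, this quantity is a constant with respect to the variable over which the infimum in $E_{\rm sq}$ is taken.

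Concretely, I would proceed as follows. First, fix an arbitrary finite-dimensional extension $\rho_{ABE}$ of the given state $\rho_{AB}$. Applying the Theorem to this extension yields
\begin{equation}
\frac{1}{2} I(A;B|E)_{\rho} \geq \frac{1}{16 \ln 2}\Norm{\rho_{AB} - \CS_{A:B}}_{\LOCC}^{2}.
\end{equation}
Because the inequality holds for every such extension and the right-hand side is independent of $E$, it holds in particular for the infimum of the left-hand side over all extensions. Recalling the definition
\begin{equation}
E_{\rm sq}(\rho_{A:B}) = \inf \left\{ \tfrac{1}{2} I(A;B|E)_{\rho} : \rho_{ABE} \text{ is an extension of } \rho_{AB} \right\},
\end{equation}
I would conclude that $E_{\rm sq}(\rho_{A:B}) \geq \frac{1}{16 \ln 2}\Norm{\rho_{AB} - \CS_{A:B}}_{\LOCC}^{2}$, which is precisely the claimed bound~\eqref{lowersquash}.

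There is essentially no hard step here: the corollary is a formal consequence of the Theorem, and the only point requiring a moment's care is the observation that a lower bound which is uniform over all extensions passes unchanged to the infimum defining $E_{\rm sq}$. The entire analytic content — in particular the appearance of the factor $\frac{1}{8\ln 2}$ and the use of the $\LOCC$ norm — is already contained in the Theorem, so the proof of the corollary is a one-line reduction rather than an independent argument.
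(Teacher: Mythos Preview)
Your proposal is correct and matches the paper's approach exactly: the paper states the corollary immediately after the Theorem with only the remark ``Combining the Theorem with the definition of squashed entanglement we find the following corollary,'' and your argument is precisely that combination. There is nothing to add.
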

Because $\norm{*}_{\LOCC}$ is a norm, this implies that squashed entanglement is faithful, i.e.\ that it is strictly positive on every entangled state. This property had been conjectured in \cite{CW04} and its resolution here settles the last major open question regarding squashed entanglement. Squashed entanglement is the entanglement measure which -- among all known entanglement measures -- satisfies most properties that have been proposed as useful for an entanglement measure. A comparison of different entanglement measures is provided in Table~\ref{table-properties}.

\begin{table}
\begin{tabular}{@{}l||l|l|l|l|l|l|l|l|l|l|l@{}}
 Measure       		& $E_{sq}$ \cite{CW04}	& $E_D$ \cite{BDSW96, Rains99}	&  $K_D$ \cite{DevWin05, HHHO05a}	&  $E_C$ \cite{HaHoTe01, BDSW96}	& $E_F$ \cite{BDSW96}	& $E_R$ \cite{VPRK97}	& $E_R^\infty$ \cite{VP98}	& $E_N$\cite{VidWer01}  \\
\hline \hline
normalisation        	&  y         		& y      	& y       	& y         	& y         	& y 		& y      		& y      	 \\
\hline
faithfulness     		&  y Cor.~\ref{corsquahsedlower}			& n \cite{HHH98}& ?	& y \cite{Yang05} & y	& y		& y \cite{BP10}		& n      	 \\
\hline
LOCC monotonicity\footnote{More precisely, we consider ``weak'' LOCC monotonicity, i.e.\ monotonicity without averages.}  &  y         		& y       	& y        	& y 		& y         	& y            & y              	& y \cite{P05}     	 \\
\hline
asymptotic continuity  &  y~\cite{AliFan04} & ?    	& ?        	& ?        	& y         	& y    \cite{DH99}    	& y \cite{Chr06}& n\cite{Chr06}      	 \\
\hline
convexity        		&  y         		& ?       	& ?  		& ?        	& y   		&  y 		&  y \cite{DoHoRu02}	& n      	 \\
\hline
strong superadditivity &  y         		& y       	& y        	& ? 		& n \cite{Shor03, H09}& n \cite{VW01}& ?	& ?    	 \\
\hline 
subadditivity		&  y   & ?  	& ?        	& y   		& y    	& y 		&  y 			& y    	 \\
\hline
monogamy    		&  y \cite{KW04}	& ? 		& ?   		& n \cite{CSW08} & n \cite{CSW08} 		& n \cite{CSW08}  		& n \cite{CSW08}   			& ? 		 \\
\end{tabular}
\vspace{0.3cm} \caption{If no citation
is given, the property either follows directly from the definition or was derived by the authors of the main reference. Many recent results listed in this table have significance beyond the study of entanglement measures, such as Hastings's counterexample to the additivity conjecture of the minimum output entropy~\cite{H09} which implies that entanglement of formation is not strongly superadditive~\cite{Shor03}.} \label{table-properties}
\end{table}

Squashed entanglement is the quantum analogue of the intrinsic information, which is defined as
\begin{align}I(X ; Y\!\downarrow \!Z) := \inf_{P_{\bar{Z}|Z}} I(X;Y|\bar{Z}),\end{align}
for a triple of random variables $X, Y, Z$ \cite{Maurer1999}. The minimisation extends over all conditional probability distributions mapping $Z$ to $\bar{Z}$. It has been shown that the minimisation can be restricted to random variables $\bar{Z}$ with size $|\bar{Z}|= |Z|$\cite{Christandl2003}. This implies that the minimum is achieved and in particular that the intrinsic information only vanishes if there exists a channel $Z \rightarrow \bar{Z}$ such that $I(X;Y|\bar{Z})=0$. Whereas our work does not allow us to derive a dimension bound on the system $E$ in the minimisation of squashed entanglement and hence conclude that the minimisation is achieved in general, we can assert such a bound if squashed entanglement vanishes:  Corollary \ref{corsquahsedlower} implies that squashed entanglement vanishes only for separable $\rho_{AB}$. By Caratheodory's theorem, the number of terms in the separable decomposition of $\sum_i p_i \rho_{A,i} \otimes \rho_{B,i}$ can be bounded by $|AB|^2$, and thus $\rho_{ABE}=\sum_i p_i \rho_{A,i} \otimes \rho_{B,i}\otimes \ket{i}\bra{i}_E$ has vanishing conditional mutual information with $E=|AB|^2$. Equivalently, there exists a channel applied to a purification of $\rho_{AB}$ resulting in $\rho_{ABE}$ such that $I(A;B|E)_\rho$ vanishes.  

\vspace{0.2 cm}
\noindent \textbf{Positive rate in state redistribution:} Quantum conditional mutual information can be given an operational interpretation in the context of the \textit{state redistribution} problem \cite{DY07,YD07}. Suppose a sender and a receiver share a quantum state $\ket{\psi}_{ABEE'}$ and the sender (who initially holds subsystems $BE$) wants to send $B$ to the receiver (who initially holds $E'$), while preserving the purity of the global state. The purifying subsystem $A$ is unavailable to both of them. Given free entanglement between the sender and receiver, the minimum achievable rate of quantum communication needed to send $B$, with vanishing error in the limit of a large number of copies of the state, is given by $\frac{1}{2} I(A;B|E)_{\psi}$ \cite{DY07,YD07}. 
Then from an optimal protocol for state redistribution we find $E_{\rm sq}(\rho_{A:B})$ to be the minimum rate of quantum communication needed to send the $B$ system, optimised over all possible ways of distributing the side-information among $E$ and $E'$ \cite{Opp08}.

One can ask: Is a positive rate of quantum communication always required in order to redistribute a system entangled with another (irrespective of the side information available to the sender and receiver)? Corollary \ref{corsquahsedlower} allows us to answer this question in the affirmative. The need of a positive rate in state redistribution can then be seen as a new distinctive feature of quantum correlations.

\vspace{0.2 cm}
\noindent \textbf{The quantum de Finetti theorem, the LOCC norm and data-hiding states:} We say a bipartite state $\rho_{A:B}$ is $k$-extendible if there is a state $\rho_{A:B_1,...,B_k}$ that is permutation-symmetric in the $B$ systems with $\rho_{A:B} = \tr_{B_2,...,B_k}(\rho_{A:B_1,...,B_k}) $. Such a family of states provides a sequence of approximations to the set of separable states. A state is separable if, and only if, it is $k$-extendible for every $k$ \cite{HudMoo76, stormer, RW89, Wer89b, KR05, CKMR07}.

Indeed, quantum versions of the de Finetti theorem \cite{KR05, CKMR07} show that any $k$-extendible state $\rho_{A:B}$ is such that
\begin{equation}
\Norm{ \rho_{AB} - \CS_{A:B}}_{1} \leq \frac{4|B|^{2}}{k}.
\end{equation}
Moreover, this bound is close to tight, as there are $k$-extendible states that are $\Omega(|B|k^{-1})$-away from the set of separable states~\cite{CKMR07}.

Unfortunately, for many applications this error estimate -- exponentially large in the number of qubits of the state -- is too big to be useful. Our next result shows that a significant improvement is possible if we are willing to relax our notion of distance of two quantum states. It shows that in one-way LOCC norm we can obtain an error term that grows as the square root of the number of qubits of the $A$ system:
\begin{corollary} \label{monogamy}
Let $\rho_{A:B}$ be $k$-extendible. Then 
\begin{equation}
\Norm{\rho_{AB} - \CS_{A:B}}_{\LOCC} \leq  \sqrt{\frac{16 \ln 2 \log |A|}{k}}.
\end{equation}
\end{corollary}

The key point in the proof of Corollary \ref{monogamy} is the fact that squashed entanglement satisfies the so-called monogamy inequality \cite{KW04}, namely
\begin{equation} \label{monogamyinequality}
E_{\rm sq}(\rho_{A:B_1B_2}) \geq E_{\rm sq}(\rho_{A:B_1}) + E_{\rm sq}(\rho_{A:B_2}),
\end{equation}
for every state $\rho_{AB_1B_2}$. This, together with the bound $E_{\rm sq}(\rho_{A:B}) \leq \log |A|$ \cite{CW04} implies that the squashed entanglement of any $k$-extendible state must be smaller than $k^{-1}\log |A|$, which combined with Corollary \ref{corsquahsedlower} gives Corollary \ref{monogamy}. 

We do not know if the bound given in Corollary \ref{monogamy} is tight. An indication, however, is given by the following example, which shows that for all $k$ there is a $k$-extendible state $\rho_{AB}$ with $\log |A|=k$ and $\Norm{\rho_{AB} - \CS_{A:B}}_{\LOCC} \geq 1$. 

\begin{example}[Lower bound]
Consider systems $A=A_1A_2\cdots A_k$ and $B=B'B''$ and define 
\begin{equation}
\rho_{AB} := \tr_{B_2\cdots B_k }(\rho_{AB_1B_2\cdots B_k}) 
\end{equation}
where
\begin{equation}
\rho_{A_1...A_k:B'_1 B''_1 \cdots  B'_k B''_k} := \text{Id}_{A_1 \cdots A_k} \otimes \text{Sym}_{B'_1 B''_1, \ldots ,B'_k B''_k} \left( \Phi_{A_1B'_1} \otimes \proj{1}_{B''_1} \otimes \cdots \otimes  \Phi_{A_kB'_k} \otimes \proj{k}_{B''_k} \right),
\end{equation}
where $\Phi := \ket{\Phi}\bra{\Phi}$ is the projector onto an EPR pair $\ket{\Phi} := (\ket{0}\ket{0} + \ket{1}\ket{1})/\sqrt{2}$, $\text{Id}$ is the identity superoperator and $\text{Sym}_{B_1, \ldots , B_k}$ is the symmetrization superoperator defined as
\begin{equation}
\text{Sym}_{B_1, \ldots ,B_k}(X) := \frac{1}{k!} \sum_{\pi \in S_k} P_{\pi} X P_{\pi^{-1}},
\end{equation}
with $S_k$ the symmetric group of order $k$ and $P_{\pi}$ the representation of the permutation $\pi$ which acts on a $k$-partite vector space as $P_{\pi} \ket{l_1}\otimes ... \otimes \ket{l_{k}} = \ket{l_{\pi^{-1}(1)}} \otimes ... \otimes \ket{l_{\pi^{-1}(k)}}$. The state $\rho_{AB}$ takes the form
\begin{equation}
\rho_{AB} = \frac{1}{k} \sum_{l=1}^{k} \Phi_{A_lB'} \otimes \proj{l}_{B''} \otimes \tau_{A_1 \cdots  A_{l-1} A_{l+1} \cdots  A_k},
\end{equation}
with $\tau_{C} := \id/|C|$ by construction $k$-extendible. Moreover we can deterministically obtain an EPR pair from $\rho_{AB}$ by the following simple one-way LOCC protocol: Bob measures the $B''$ register in the computational basis and tells Alice the outcome $l$ obtained. Then she traces out all her systems except the $l$-th. Thus the one-way LOCC distance of $\rho_{AB}$ to separable states is at least the one-way LOCC distance of an EPR pair to separable states, which is known to be one~\cite{Virmani}, and so
\begin{equation} 
\Norm{\rho_{AB} - \CS_{A:B}}_{\leftLOCC} \geq \Omega(1).
\end{equation}
\end{example}
A direct implication of Corollary \ref{monogamy} concerns data-hiding states \cite{DLT02, DHT03, EW02, HLW06}. Every state $\rho_{AB}$ that can be well-distinguished from separable states by a global measurement, yet is almost completely indistinguishable from a separable state by LOCC measurements is a so-called data-hiding state: it can be used to hide a bit of information (of whether the state prepared is $\rho_{AB}$ or its closest separable state in LOCC norm) that is not accessible by LOCC operations alone. The antisymmetric state of sufficiently high dimension is an example of a data hiding state \cite{EW02}, as are random mixed states with high probability \cite{HLW06} (given an appropriate choice of the dimensions and the rank of the state). Corollary \ref{monogamy} shows that highly extendible states that are far away in trace norm from the set of separable states must necessarily be data-hiding (at least under one-way LOCC measurements).

\vspace{0.2 cm}
\noindent \textbf{A quasipolynomial-time algorithm for separability of quantum states:} Given a density matrix describing a bipartite quantum state $\rho_{A:B}$, the separability problem consists of determining if the state is entangled or separable. This is one of the most basic questions in quantum information theory and has been a topic of active research in the past years (see e.g.\ \cite{HHHH09}).

In the weak-membership problem for separability one should decide if a given bipartite state $\rho_{A:B}$ is in the interior of the set of separable states or $\epsilon$-away from any separable state (in trace norm), given the promise that one of the two alternatives holds true. The best known algorithms for the problem have worst case complexity $2^{\poly(|A|, |B|)\log(1/\epsilon)}$ (see e.g.\ \cite{DPS04, BV04, Ioa07, NOP09}). In fact, the problem is $\NP$-hard for $\epsilon = 1/\poly(n)$ \cite{Gur03, Gha10, Bei08}. 



Let ${\cal S}_{A:B}$ be the convex set of separable states acting on $A \otimes B$. Corollary \ref{monogamy} implies that for the one-way LOCC-norm version of the weak-membership problem for separability, one can greatly improve upon the previous known algorithms. 
\begin{corollary} \label{complexity}
There is a quasipolynomial-time algorithm for solving the weak membership problem for separability in the one-way LOCC norm and the Euclidean norm. More precisely, there is an $\exp ( O(\epsilon^{-2} \log|A| \log|B|))$-time algorithm for deciding $W_{\rm{SEP}}(\epsilon, \Vert * \Vert_{\Mfont{LOCC}})$ and $W_{\rm{SEP}}(\epsilon, \Vert * \Vert_2)$.
\end{corollary}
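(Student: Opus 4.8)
The plan is to reduce the weak-membership problem to testing $k$-extendibility by semidefinite programming, with the value of $k$ dictated by Corollary~\ref{monogamy}. Recall that $\rho_{AB}$ is $k$-extendible exactly when there exists a density operator $\tilde\rho$ on $A \otimes B^{\otimes k}$ that is positive semidefinite, has unit trace, is invariant under permutations of the $k$ copies of $B$, and satisfies $\tr_{B_2\cdots B_k}\tilde\rho = \rho_{AB}$. All of these are linear or semidefinite constraints, so $k$-extendibility is the feasibility region of a semidefinite program whose matrix variable has dimension $|A|\,|B|^k$ (the Doherty--Parrilo--Spedalieri hierarchy~\cite{DPS04}). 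I would fix $k = \lceil 64\ln 2\,\epsilon^{-2}\log|A|\rceil = O(\epsilon^{-2}\log|A|)$, solve this feasibility program (with a small slack in the marginal constraint, fixed below), and declare $\rho_{AB}$ separable if and only if it is feasible.

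Correctness rests on two facts. Every separable state is $k$-extendible for all $k$, so in a \YES\ instance the program is feasible. Conversely, Corollary~\ref{monogamy} supplies the quantitative converse: any $k$-extendible state lies within $\sqrt{16\ln 2\,\log|A|/k}$ of $\CS_{A:B}$ in LOCC norm, which for the above $k$ is at most $\epsilon/2$. Hence a state that is $\epsilon$-far from $\CS_{A:B}$ cannot be $k$-extendible, and the program is infeasible in a \NO\ instance.

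To make this robust to the finite precision of any SDP solver, I would only decide the feasibility problem up to the promise gap: accept if there is a $\tilde\rho$ as above with $\Norm{\tr_{B_2\cdots B_k}\tilde\rho - \rho_{AB}}_1 \leq \epsilon/2$. If such a $\tilde\rho$ exists, its marginal $\rho' := \tr_{B_2\cdots B_k}\tilde\rho$ is \emph{exactly} $k$-extendible, and the triangle inequality together with $\norm{X}_\LOCC \leq \Norm{X}_1$ yields $\Norm{\rho_{AB} - \CS_{A:B}}_\LOCC \leq \Norm{\rho_{AB}-\rho'}_1 + \Norm{\rho'-\CS_{A:B}}_\LOCC \leq \epsilon/2 + \epsilon/2$, so every accepted state is within $\epsilon$ of separable; combined with the preceding paragraph this decides $W_{\rm SEP}(\epsilon, \norm{*}_\LOCC)$. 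The Euclidean case is identical: using $\norm{X}_2 \leq \sqrt{153}\,\norm{X}_\LOCC$ from~\eqref{relation2norm} and $\norm{X}_2 \leq \Norm{X}_1$, the same program with $k = O(\epsilon^{-2}\log|A|)$ (a larger constant absorbing the factor $153$) certifies closeness in Euclidean norm and solves $W_{\rm SEP}(\epsilon, \norm{*}_2)$.

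For the running time, the program has a matrix variable of dimension $|A|\,|B|^k$ and can be solved to the required precision in time polynomial in $|A|\,|B|^k$ by interior-point or ellipsoid methods. With $k = O(\epsilon^{-2}\log|A|)$ this is $\exp\big(O(\log|A| + k\log|B|)\big) = \exp\big(O(\epsilon^{-2}\log|A|\log|B|)\big)$, as claimed; it is essential here that only the $B$ systems are extended, so that the extension dimension contributes $k\log|B|$ while the extendibility bound contributes only $\log|A|$. The step requiring the most care is precisely this precision/robustness analysis — verifying that an approximately feasible SDP solution still certifies $\epsilon$-closeness to $\CS_{A:B}$ — since the conceptual content of both correctness and the quasipolynomial running time is supplied directly by Corollary~\ref{monogamy}, and the remainder is the standard but careful bookkeeping of semidefinite feasibility with a promise gap.
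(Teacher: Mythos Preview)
Your proposal is correct and follows essentially the same approach as the paper: both reduce the weak-membership problem to feasibility of the $k$-extendibility SDP with $k = O(\epsilon^{-2}\log|A|)$ chosen via Corollary~\ref{monogamy}, and both obtain the running time from the dimension $|A|\,|B|^{k}$ of the extension. Your write-up is in fact more careful than the paper's sketch, which defers the precision/robustness bookkeeping and the explicit Euclidean-norm reduction to~\cite{BCY11}; your treatment of the promise gap via an $\epsilon/2$ slack in the marginal constraint is exactly the kind of detail the paper omits.
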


It is intriguing that the complexity of our algorithm matches the best hardness result for the trace-norm version of the problem \cite{HM10}. It is an open question if a similar hardness result could be obtained for the one-way LOCC norm case, which would imply that our algorithm is essentially optimal.

The algorithm, which we analyse in more detail in Section \ref{twocorollaries}, is very simple and in fact is the basis of a well-known hierarchy of efficient tests for separability \cite{DPS04}. Using semidefinite-programming one looks for a $\Omega(\log|A| \epsilon^{-2})$-extension of $\rho_{A:B}$ and decides that the state is separable if one is found. If the state is separable, then clearly there is such an extension. If it is $\epsilon$-away from any separable state, Corollary \ref{monogamy} shows that no such extension exists.\footnote{In \cite{DPS04} one imposed the further constraint that the symmetric-extension has to have a positive partial transpose with respect to all $B$ systems. It is an interesting open question if the worst-case complexity of the algorithm (in the one-way LOCC-norm case) can be further improved taking into consideration this extra constraint.} 

More detailed information on this algorithm as well as the related problem of optimising a linear function over separable states, relevant for instance in mean-field theory, is given in~\cite{BCY11}.

\vspace{0.2 cm}
\noindent \textbf{Quantum Merlin-Arthur games with multiple Merlins:} A final application of the Theorem concerns quantum Merlin-Arthur games. The class $\QMA$ can be considered the quantum analogue of $\NP$ and is formed by all languages that can be decided in quantum polynomial-time by a verifier who is given a quantum system of polynomially many qubits as a proof (see e.g.\ \cite{Wat08}). It is natural to ask how robust this definition is. A few results are known in this direction: For example, it is possible to amplify the soundness and completeness parameters to exponential accuracy, even without enlarging the proof size \cite{MW05}. Also, the class does not change if we allow a first round of logarithmic-sized quantum communication from the verifier to the prover \cite{BSW10}. 

From Corollary~\ref{monogamy} we get a new characterisation of $\QMA$, which at first sight might appear to be strictly more powerful: We show $\QMA$ to be equal to the class of languages that can be decided in polynomial time by a verifier who is given a constant number $k$ of unentangled proofs and can measure them using any quantum polynomial-time implementable one-way LOCC protocol (among the $k$ proofs). We hope this characterisation of $\QMA$ shows useful in devising new $\QMA$ verifying protocols.

In order to formalise our result we consider the classes $\QMA_{\Mclass}(k)_{m,s,c}$, defined in analogy to $\QMA$ as follows \cite{KMY03, ABDFS08, HM10}:
\begin{definition}
A language $L$ is in $\QMA_{\Mclass}(k)_{m, s, c}$ if for every input $x \in \{ 0, 1\}^{n}$ there exists a poly-time implementable two outcome measurement $\{ M_x, \id - M_x \}$ from the class $\Mclass$ such that
\begin{itemize}
\item \textbf{Completeness:} If $x \in L$, there exist $k$ witnesses $\ket{\psi_1},...,\ket{\psi_k}$, each of $m$ qubits, such that
\begin{equation}
\tr \left( A_x \left(\ket{\psi_1}\bra{\psi_1} \otimes ... \otimes \ket{\psi_k}\bra{\psi_k} \right)  \right) \geq c
\end{equation}
\item \textbf{Soundness:} If $x \notin L$, then for any states $\ket{\psi_1}, ..., \ket{\psi_k}$
\begin{equation}
\tr \left( A_x \left(\ket{\psi_1}\bra{\psi_1} \otimes ... \otimes \ket{\psi_k}\bra{\psi_k} \right)  \right) \leq s
\end{equation}
\end{itemize}
We call $\QMA_{\Mclass}(k) = \QMA_{\Mclass}(k)_{\poly(n), 2/3, 1/3}$.
\end{definition}

Let $\Mclass = \SEP_{\text{YES}}$ be the class of (non-normalised) separable POVM elements.\footnote{POVM elements $A$ from this class form two-outcome POVMs $\{ A, \id - A \}$ known as separable-once-remove, meaning that they are separable measurements once one of the effects is removed. Here we use the notation $\SEP_{\text{YES}}$ to denote that the POVM element associated to \textit{accept} should be be separable, i.e.\ proportional to a separable state.} Then Harrow and Montanaro showed that $\QMA_{\SEP_{\text{YES}}}(2) = \QMA(2) = \QMA(k)$ for any $k = \poly(n)$ \cite{HM10}, i.e.\ two proofs are just as powerful as $k$ proofs and one can  restrict the verifier's action to $\SEP_{\text{YES}}$ without changing the expressive power of the class. 

We define $\QMA_{\LOCC}(k)$ in an analogous way, but now the verifier can only measure the $k$ proofs with a one-way LOCC measurement, with forward classical communication. Then we have,
\begin{corollary} \label{QMA2}
For $k = O(1)$,
\begin{equation} \label{QMA2LOCCequalQMA}
\QMA_{\LOCC}(k) = \QMA .
\end{equation}
In particular,
\begin{equation} \label{qma2exact}
\QMA_{\LOCC}(2)_{m, s, c} \subseteq \QMA_{ O(m^{2} \epsilon^{-2}), s + \epsilon, c} .
\end{equation}
\end{corollary}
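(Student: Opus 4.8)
The plan is to prove both inclusions, reducing the nontrivial direction to a single ``remove one Merlin'' step that is then iterated a constant number of times. The easy inclusion $\QMA \subseteq \QMA_{\leftrightLOCC}(k)$ is immediate: a single-prover verifier is simulated by a $k$-prover one that ignores all but the first proof and runs the $\QMA$ circuit on it together with its private workspace; since only one party acts this is trivially an LOCC measurement, and completeness and soundness are inherited verbatim. For the reverse inclusion it suffices to establish the one-step reduction $\QMA_{\leftrightLOCC}(k)_{m,s,c}\subseteq \QMA_{\leftrightLOCC}(k-1)_{m',s+\epsilon,c}$ with $m'=O(m^2\epsilon^{-2})$, because $\QMA_{\leftrightLOCC}(1)=\QMA$ (LOCC ``among one party'' is an arbitrary quantum poly-time measurement). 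Iterating this $k-1$ times collapses $\QMA_{\leftrightLOCC}(k)$ to $\QMA$; the displayed ``in particular'' statement is exactly the case $k=2$, i.e.\ a single step.

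I first describe the step for $k=2$, which is the core. Let $\{M_x,\id-M_x\}$ be the LOCC verification measurement on two $m$-qubit proofs $A$ and $B$. In the simulating $\QMA$ protocol the single Merlin sends a state on $A B_1\cdots B_n$, each register of $m$ qubits, with $n=\lceil 16\ln 2\,m\,\epsilon^{-2}\rceil=O(m\epsilon^{-2})$; the verifier symmetrises the $B$-registers (applying a uniformly random permutation, a poly-time local operation), then runs $M_x$ on $AB_1$ and discards $B_2\cdots B_n$. For completeness, if $x\in L$ the Merlin sends $\proj{\psi_1}_A\otimes\proj{\psi_2}_{B_1}\otimes\cdots\otimes\proj{\psi_2}_{B_n}$ for the optimal witnesses; this is already symmetric, so the measured state is $\proj{\psi_1}\otimes\proj{\psi_2}$ and acceptance is at least $c$. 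For soundness, after symmetrisation the reduced state $\rho_{AB_1}$ is $n$-extendible, so Corollary~\ref{monogamy} gives $\Norm{\rho_{AB_1}-\CS_{A:B_1}}_{\LOCC}\le\sqrt{16\ln2\,\log|A|/n}\le\epsilon$. Picking $\tau\in\CS_{A:B_1}$ achieving this distance and using that the specific LOCC measurement $\{M_x,\id-M_x\}$ cannot beat the optimal LOCC distinguishing probability,
\begin{equation}
\tr\!\big(M_x(\rho_{AB_1}-\tau)\big)\le\tfrac12\Norm{\rho_{AB_1}-\tau}_{\LOCC}\le\tfrac{\epsilon}{2}.
\end{equation}
Since $\tau$ is separable and the original soundness bound extends by linearity from product states to all separable states, $\tr(M_x\tau)\le s$, whence acceptance is at most $s+\epsilon/2\le s+\epsilon$. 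The proof length is $(n+1)m=O(m^2\epsilon^{-2})$, giving the ``in particular'' statement.

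For general constant $k$ the same construction removes the $k$-th Merlin. Merlins $1,\dots,k-2$ send their $m$-qubit proofs unchanged, while Merlin $k-1$ sends $C D_1\cdots D_n$ with $C$ playing the role of proof $k-1$ and the $D_j$ being $n=O(m\epsilon^{-2})$ extension copies of proof $k$. The verifier symmetrises the $D$-registers (local to Merlin $k-1$), sets proof $k:=D_1$, discards $D_2\cdots D_n$, and runs $M_x$; because merging two parties never decreases LOCC power, $M_x$ remains an LOCC measurement among the $k-1$ parties, so the new protocol is a legitimate $\QMA_{\leftrightLOCC}(k-1)$ protocol. Completeness is as before. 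For soundness the decisive observation is that in a $(k-1)$-Merlin protocol the proofs are in exact tensor product across the parties, so the processed state is $\big(\bigotimes_{i\le k-2}\xi_i\big)\otimes\zeta_{CD_1}$ with $\zeta_{CD_1}$ $n$-extendible across $C\!:\!D_1$; by Corollary~\ref{monogamy} there is a separable $\tau_{CD_1}$ within $\epsilon$ of $\zeta_{CD_1}$ in LOCC norm, and $\big(\bigotimes_i\xi_i\big)\otimes\tau_{CD_1}$ is \emph{fully} separable across all $k$ parties. Tensoring with the fixed product state $\bigotimes_i\xi_i$ on the untouched parties does not increase the $k$-partite LOCC distinguishability (those parties carry no hypothesis-dependent information), so the processed state is $\epsilon$-close in $k$-partite LOCC norm to a fully separable state and the soundness bound $s+\epsilon/2$ follows exactly as above. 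Iterating $k-1$ times, allocating a budget $\epsilon/(k-1)$ per step, yields $\QMA_{\leftrightLOCC}(k)\subseteq\QMA$ with proof length polynomial for constant $k$ and a constant completeness--soundness gap, which standard $\QMA$ amplification restores to $(2/3,1/3)$.

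The step I expect to be the main obstacle is guaranteeing \emph{full} $k$-partite separability, rather than mere separability across a single cut, from the bipartite bound of Corollary~\ref{monogamy}. The resolution hinges on three points that must be checked carefully: that merging Merlins keeps the verification LOCC, so $M_x$ stays admissible after the reduction; that in the soundness case the untouched Merlins remain in exact tensor product with the extended pair, so that approximating only the $C\!:\!D_1$ cut already yields a globally separable target state; and that the $k$-partite LOCC norm does not grow when a fixed product state is tensored onto decoupled parties. It is equally essential that the distance be measured in LOCC norm and not trace norm: the data-hiding counterexample noted after the Theorem shows that in trace norm a highly extendible state can remain $\Omega(1)$-far from the separable set, so no trace-norm analogue of this reduction can hold.
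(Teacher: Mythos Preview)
Your proof is correct and follows essentially the same route as the paper: both establish the $k=2$ case by having the single Merlin send an $O(m\epsilon^{-2})$-fold symmetric extension and invoking Corollary~\ref{monogamy} for soundness, and both handle general constant $k$ by iterating a ``merge two Merlins into one'' step, using the observation that the untouched Merlins' fixed product states can be locally simulated so that the multipartite LOCC distance reduces to the bipartite one. The only differences are cosmetic---you merge the last two Merlins while the paper merges the first two, you track the $\tfrac12$ factor in the LOCC-norm bound (getting soundness $s+\epsilon/2$ rather than the paper's looser $s+\epsilon$), and you allocate the error budget as $\epsilon/(k-1)$ per step whereas the paper accumulates $l\epsilon$ and amplifies at the end.
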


A preliminary result in the direction of Corollary \ref{QMA2} appeared in \cite{Bra08}, where a similar result was shown for $\QMA_{\LO}(k)$, a variant of $\QMA(k)$ in which the verifier is restricted to implement only local measurements on the $k$ proofs and jointly post-process the outcomes classically.\footnote{$\QMA_{\LO}(k)$ is also called $\BellQMA(k)$ \cite{ABDFS08} since the verifier is basically restricted to perform a \textit{Bell test} on the proofs.}

It is an open question whether \eqref{qma2exact} remains true if we consider $\QMA(2)$ instead of $\QMA_{\LOCC}(2)$. If this turns out to be the case, then it would imply an optimal conversion of $\QMA(2)$ into $\QMA$ in what concerns the proof length (under a plausible complexity-theoretic assumption). For it follows from \cite{HM10} (based on the $\QMA(\poly \log (n)\sqrt{n})_{\log(n), 1, 1/3}$ protocol for 3-$\SAT$ of \cite{ABDFS08}) that unless there is a subexponential-time quantum algorithm for 3-$\SAT$, then there is a constant $\epsilon_0 > 0$ such that for every $\delta > 0$,
\begin{equation} 
\QMA(2)_{m, s, c} \nsubseteq \QMA_{ O(m^{2-\delta} \epsilon_{0}^{-2}), s + \epsilon_{0}, c}.
\end{equation}
Another open question is whether a similar result holds for $\QMA_{\LOCC}(2)$. This would be the case if one could construct a one-way LOCC version of the productness test of Harrow and Montanaro \cite{HM10}.

An interesting approach to the $\QMA(2)$ vs. $\QMA$ question concerns the existence of \textit{disentangler} superoperators \cite{ABDFS08}, defined as follows. 

\begin{definition}
A superoperator $\Lambda : S \rightarrow AB$ is a $(\log|S|, \epsilon, \delta)$-disentangler in the $\Vert * \Vert$ norm if 
\begin{itemize}
\item $\Lambda(\rho)$ is $\epsilon$-close to a separable state for every $\rho$, and
\item for every separable state $\sigma$, there is a $\rho$ such that $\Lambda(\rho)$ is $\delta$-close to $\sigma$.
\end{itemize}
\end{definition}
As noted in \cite{ABDFS08}, the existence of an efficiently implementable $(\poly(\log |A|, \log |B|), \epsilon, \delta)$-disentangler in trace norm (for sufficiently small $\epsilon$ and $\delta$) would imply $\QMA(2) = \QMA$. Watrous has conjectured that this is not the case and that for every $\epsilon, \delta < 1$, any $(\epsilon, \delta)$-disentangler (in trace-norm) requires $|S| = 2^{\Omega( \min(|A|, |B|))}$. For an exponentially large $|S|$, in turn, Matsumoto presented a construction of a quantum disentangler \cite{Mat05} and the quantum de Finetti theorem gives an alternative construction \cite{CKMR07}.
Corollary \ref{monogamy} can be rephrased as saying that there \textit{is} an efficient disentangler in LOCC norm. 
\begin{corollary}
Let $k = \lceil 8 \ln 2 \log |A| \epsilon^{-2} \rceil$ and $S := AB_1\cdots B_k$. Define the superoperator $\Lambda : S \rightarrow A  B$, with $|B| = |B_j|$ for all $j \leq k$, as
\begin{equation}
\Lambda(\rho_{AB_1\cdots B_k}) := \frac{1}{k} \sum_{i=1}^{k} \rho_{AB_i
}.
\end{equation}
Then $\Lambda$ is a $(O(\log | A| \log | B| \epsilon^{-2}), \epsilon, 0)$-disentangler in one-way LOCC norm.  
\end{corollary}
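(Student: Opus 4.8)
The plan is to check the two conditions in the definition of a disentangler for the averaging channel $\Lambda$, together with the advertised size of the seed register. The size bound is immediate: since $S = AB_1\cdots B_k$ with $|B_j| = |B|$ for every $j$, we have $\log|S| = \log|A| + k\log|B|$, and substituting $k = \lceil 8\ln 2 \log|A| \epsilon^{-2}\rceil = O(\log|A|\epsilon^{-2})$ gives $\log|S| = O(\log|A| \log|B| \epsilon^{-2})$, matching the first parameter. One also notes in passing that $\Lambda$ is a legitimate superoperator, being an average of the completely positive trace-preserving maps that trace out all $B$-registers but one.

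The heart of the argument, and the step I expect to be the main obstacle, is the first disentangler condition: that $\Lambda(\rho) = \frac1k\sum_{i=1}^k \rho_{AB_i}$ lies within $\epsilon$ of $\CS_{A:B}$ in LOCC norm for \emph{every} input $\rho_{AB_1\cdots B_k}$. The idea is to exhibit $\Lambda(\rho)$ as a $k$-extendible state and then apply Corollary \ref{monogamy}. To this end I would symmetrise $\rho$ over the $B$-systems, setting $\tilde\rho := \text{Sym}_{B_1,\ldots,B_k}(\rho)$, which is permutation-invariant on $B_1\cdots B_k$ by construction. Computing its $AB_1$-marginal and using that, in the average over $S_k$, each register $B_i$ is mapped into the first slot equally often, one finds $\tr_{B_2\cdots B_k}\tilde\rho = \frac1k\sum_{i=1}^k \rho_{AB_i} = \Lambda(\rho)$. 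Hence $\Lambda(\rho)$ is $k$-extendible, and Corollary \ref{monogamy} yields $\Norm{\Lambda(\rho) - \CS_{A:B}}_{\LOCC} \leq \sqrt{16\ln 2\,\log|A|/k}$, which the choice $k = \Theta(\log|A|\epsilon^{-2})$ brings to at most $\epsilon$.

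The second condition, with the strong requirement $\delta = 0$, is the easier direction once the right preimage is written down. Given any separable target $\sigma_{AB} = \sum_j p_j \proj{\alpha_j}_A \otimes \proj{\beta_j}_B$, I would take the ``diagonal'' extension $\rho := \sum_j p_j \proj{\alpha_j}_A \otimes \proj{\beta_j}_{B_1} \otimes \cdots \otimes \proj{\beta_j}_{B_k}$, a bona fide state on $S$ each of whose $AB_i$-marginals equals $\sigma_{AB}$ under the identification $B_i \cong B$. Then $\Lambda(\rho) = \frac1k\sum_{i=1}^k \sigma_{AB} = \sigma_{AB}$ exactly, so every separable state is hit on the nose. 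Combining the three points shows that $\Lambda$ is an $(O(\log|A|\log|B|\epsilon^{-2}),\epsilon,0)$-disentangler in LOCC norm, as claimed.
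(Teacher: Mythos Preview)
Your proof is correct and follows exactly the route the paper intends: the corollary is presented there as an immediate rephrasing of Corollary~\ref{monogamy}, with no separate proof given, and you have supplied the natural details—symmetrising over the $B$ registers to exhibit $\Lambda(\rho)$ as a $k$-extendible state, and writing down the diagonal separable extension to achieve $\delta=0$. The only quibble is a harmless factor-of-$\sqrt{2}$ slippage in the constant (the stated $k=\lceil 8\ln 2\log|A|\epsilon^{-2}\rceil$ together with Corollary~\ref{monogamy} gives $\sqrt{2}\,\epsilon$ rather than $\epsilon$), but this discrepancy is already present in the paper's own arithmetic and does not affect the $O(\cdot)$ claim.
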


For more applications of this work to quantum Merlin-Arthur games, see~\cite{BCY11}.

\noindent {\bf Distinguishability under locality-restricted measurements.}
We will now give a formal definition of the norm $\norm{*}_\LOCC$ that was used in order to state the results. $\norm{*}_\LOCC$ can be seen as a restricted version of the trace norm $\norm{*}_1 $. Introducing the set $\ALL$ as consisting of all operators $M$ that satisfy $0 \leq M \leq \id$, the trace norm can be written as
\begin{align}\norm{X}_1= \max_{M \in \ALL} \tr((2M - \id) X),
\end{align} 
where $\id$ is the identity matrix.  The trace norm is of special importance in quantum information theory as it is directly related to the optimal probability for distinguishing two equiprobable states $\rho$ and $\sigma$ with a quantum measurement.\footnote{Two-outcome measurements suffice for such tasks, and these are described by a pair of positive semidefinite matrices summing to $\id$, which we write $\{M,\id - M\}$.  When the state is $\rho$, the probabilities of the outcomes are $\Pr(M) = \tr (M\rho)$ and $\Pr(\id - M) = \tr((\id - M)\rho) = 1 - \Pr(M)$. The optimal bias of distinguishing two states $\rho$ and $\sigma$ is then given by $\max_{0 \leq M \leq \id}\tr(M(\rho - \sigma)) = \frac{1}{2}\Vert \rho - \sigma \Vert_{1}$.}  
In analogy with this interpretation of the trace norm, and for operators $X$ on the tensor product space $AB$, we define the one-way LOCC norm as
\begin{equation}
\norm{X}_{\LOCC} := \max_{M \in \LOCC} \tr((2M - \id) X),
\end{equation}
where $\LOCC$ is the convex set of matrices $0 \leq M \leq \id$ such that there is a two-outcome measurement $\{M, \id - M \}$ that can be realized by one-way LOCC between Alice and Bob~\cite{MWW09}. It will become clear below why this defines a norm. The optimal bias in distinguishing $\rho$ and $\sigma$ by any one-way LOCC protocol is then $\frac{1}{2} \norm{\rho - \sigma }_{\LOCC}$. 


The introduced norm fits into a general framework \cite{MWW09} for restricted norms where one considers 
$\Mclass$, a closed, convex subset of the operator interval 
\begin{align}
 \{M : 0 \leq M \leq \id\} = \{M : M \geq 0, \norm{M}_\infty \leq 1\}
 \end{align}
containing $\id$, having nonempty interior, and such that $M \in \Mclass$ implies $\id -M \in \Mclass$. Then
\begin{align} \label{eq:norm}
\norm{X}_\Mclass = \max_{M \in \Mclass}\tr\big(\!(2M - \id) X\big) = \max_{Y \in \mathsf{D}} \tr (YX),
\end{align}
where 
\[\mathsf{D} = \mathrm{conv}(\Mclass \cup -\Mclass) = \{2M- \id: M\in \Mclass\}= \{Y : \norm{Y}_\Mclass^* \leq 1\}
\] 
is the unit ball for the dual norm $\norm{\cdot}_\Mclass^*$.

\vspace{0.2 cm}
\noindent \textbf{Relating entanglement measures:} 
The core of the proof of the Theorem is composed of three steps, Lemmas~\ref{nonlockability}, \ref{almostmonogamy}, and~\ref{lowerboundnorm} below, each of which is a new result about entanglement measures. They ca be straightforwardly combined to prove the Theorem. 

A measure that plays a fundamental role in the proof is the regularised relative entropy of entanglement \cite{VP98, VW01}, defined as 
\begin{equation}
E_{R}^{\infty}(\rho_{A:B}) := \lim_{n \rightarrow \infty} \frac{E_{R}(\rho_{A:B}^{\otimes n})}{n},
\end{equation}
with the relative entropy of entanglement given by
\begin{equation}
E_R(\rho_{A:B}) := \min_{\sigma_{AB} \in {\cal S}_{A:B}} S(\rho_{AB} || \sigma_{AB}),
\end{equation}
where the quantum relative entropy is given by $S(\rho || \sigma) = \tr(\rho(\log \rho - \log \sigma))$ if $\supp \ \rho \subseteq \supp \  \sigma $ and  $S(\rho || \sigma)=\infty$ otherwise.
A distinctive property of the relative entropy of entanglement among entanglement measures is the fact that it is not lockable, i.e.\ by discarding a small part of the state, $E_R$ can only drop by an amount proportional to the number of qubits traced out. Indeed, as shown in \cite{HHHO05}, one has
\begin{equation} \label{nonlockable}
E_R(\rho_{A:BE}) \leq  E_R(\rho_{A:E}) +   2 H(B)_{\rho}.
\end{equation}
While the same is true for $E_R^{\infty}$, we use an optimal protocol for state redistribution \cite{DY07,YD07} to obtain the following improvement of Eq. \eqref{nonlockable}, which is valid only for the regularised quantity. 
\begin{lemma} \label{nonlockability}
Let $\rho_{ABE}$ be a quantum state. Then, 
\begin{equation}
 I(A;B|E)_{\rho} \geq E_R^\infty(\rho_{A:BE}) -  E_R^\infty(\rho_{A:E}).
\end{equation}
\end{lemma}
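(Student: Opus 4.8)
The plan is to establish the inequality at the regularised level as a quantitative, ``compressed'' improvement of the non-lockability bound~\eqref{nonlockable}. In that bound the cost of appending the $B$ system to the $E$-side is controlled by its raw size $H(B)$ (or $\log|B|$), with a factor of $2$ coming from asymptotic continuity. The idea is that, asymptotically, $B$ need not be carried in full: the optimal state redistribution protocol of~\cite{DY07,YD07} shows that $B$ can be conveyed across the Alice$|$Bob cut, given the side information $E$ and the reference $A$, at a quantum communication rate of only $\tfrac12 I(A;B|E)_\rho$ qubits per copy. Feeding this smaller effective register into~\eqref{nonlockable} should convert the prefactor $2$ into the desired coefficient $2\cdot\tfrac12 I(A;B|E)=I(A;B|E)$.

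Concretely, I would work with $n$ copies and use reversibility of state redistribution in the ``reconstruction'' direction. For any $\delta>0$ and $n$ large, there is a protocol that starts from the state $\rho_{AE}^{\otimes n}$ shared across the $A^n|E^n$ cut and produces a state $\delta$-close to $\rho_{ABE}^{\otimes n}$, using only a local operation on Bob's side, free shared entanglement, and a quantum message register $W$ of $w=\tfrac n2 I(A;B|E)_\rho+o(n)$ qubits sent from Alice to Bob. The entanglement accounting is then carried out with the three defining properties of $E_R^\infty$: LOCC monotonicity, subadditivity, and asymptotic continuity~\cite{AliFan04}. Bob's final decoding is a local map, so it can only decrease $E_R^\infty$; transporting the register $W$ from Alice's side to Bob's side is exactly the situation governed by~\eqref{nonlockable} (applied with $B\mapsto W$), which bounds the resulting increase by $2H(W)\le 2w$.

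Chaining these steps gives, up to an $o(n)$ correction absorbed by asymptotic continuity,
\begin{equation}
E_R\!\left(\rho_{ABE}^{\otimes n}\right)\ \le\ E_R\!\left(\rho_{AE}^{\otimes n}\right)+2H(W)+o(n)\ \le\ n\,E_R^\infty(\rho_{A:E})+n\,I(A;B|E)_\rho+o(n).
\end{equation}
Dividing by $n$ and letting $n\to\infty$ yields $E_R^\infty(\rho_{A:BE})\le E_R^\infty(\rho_{A:E})+I(A;B|E)_\rho$, which is the claim. The monotonicity direction here is the favourable one: every operation other than the quantum communication is LOCC across the $A|\mathrm{rest}$ cut, so it cannot raise $E_R^\infty$, and only the $w$-qubit message is allowed to increase it.

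I expect the main obstacle to be the resource bookkeeping rather than the conceptual skeleton. Two points need care. First, state redistribution is a \emph{catalytic} resource identity: it borrows and returns free entanglement, and its signed entanglement cost $\tfrac12\big(I(B;E)-I(B;E')\big)$ must be shown not to contribute to $E_R^\infty$ net of the catalyst, using subadditivity together with the fact that the borrowed ebits are recovered. Second, one must ensure the factor $2$ is paid \emph{once}: the register $W$ should be viewed as sitting with Alice (together with $A^n$) before transmission, contributing only $o(n)$ to the $A^nW|E^n$ cut, so that~\eqref{nonlockable} is invoked a single time when $W$ crosses to Bob. Finally, the protocol only achieves $\rho_{ABE}^{\otimes n}$ approximately, so asymptotic continuity of $E_R^\infty$ is needed to replace the output by the exact state at the cost of a vanishing per-copy error. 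Making each of these estimates uniform in $n$, so that all corrections are genuinely $o(n)$, is the delicate part of the argument.
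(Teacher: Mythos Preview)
Your high-level strategy---use state redistribution to compress $B^n$ into a register of $\approx \tfrac n2 I(A;B|E)$ qubits, then pay for that register once via the non-lockability bound~\eqref{nonlockable}---is exactly the paper's approach. However, your operational picture of the protocol is wrong in a way that makes the argument, as written, fail.

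In state redistribution, $A$ is the passive \emph{reference}: it is never touched and nothing is ever sent from its side. The quantum communication is between the sender (holding $BE$) and the receiver (holding the purifying system $E'$), and both of these sit on the \emph{same} side of the $A\,|\,\text{rest}$ cut relevant to $E_R^\infty(\rho_{A:BE})$. Consequently there is no protocol in which ``Alice sends $W$ to Bob'' starting from $\rho_{AE}^{\otimes n}$: Alice holds only the reference $A^n$ and has no way to produce a message correlated with $B^n$, so your intermediate assertion that $W$ ``sitting with Alice contributes only $o(n)$ to the $A^nW|E^n$ cut'' has no content. The correct reading---and this is what the paper does---stays entirely on Bob's side. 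The encoding $\Lambda_n\colon B^nE^nD\to E^nFG$ is a local, asymptotically reversible map for Bob, so $E_R(\rho_{A:BE}^{\otimes n})=E_R(\phi_{A^n:E^nFG})+o(n)$; one then \emph{discards} $G$ (rather than transports it) and applies~\eqref{nonlockable} with $B\mapsto G$ to obtain $E_R(\phi_{A^n:E^nFG})\le E_R(\phi_{A^n:E^nF})+2\log|G|$. Since $\phi_{A^nE^nF}\to \rho_{AE}^{\otimes n}\otimes\tau_F$ in trace norm and $\log|G|/n\to\tfrac12 I(A;B|E)$, asymptotic continuity of $E_R$ finishes. In this picture nothing crosses the $A|\text{rest}$ cut, the catalysts $D$ and $F$ are product across that cut and cost nothing, and the factor $2$ is paid exactly once---so the bookkeeping concerns you raise simply do not arise.
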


The second step in the proof is to obtain a lower bound on $E^{\infty}_{R}(\rho_{A:BE}) - E_R^\infty(\rho_{A:E})$, which by Lemma~\ref{nonlockability} also gives a lower bound on the conditional mutual information. If we could prove that $E_R^{\infty}$ satisfied the monogamy inequality given by \eqref{monogamyinequality}, we would be done, since Ref. \cite{Pia09} shows that 
\begin{equation}
E_{R}^{\infty}(\rho_{A:B}) \geq \frac{1}{2 \ln 2}  \Norm{ \rho_{A:B} - \CS_{A:B}}_{\leftrightLOCC}^{2}.
\end{equation}
However, the antisymmetric state is a counterexample \cite{CSW08}. Nonetheless, we will show that the regularised relative entropy of entanglement satisfies a weaker form of the monogamy inequality, and this will suffice to us. 

In order to state the new inequality, we have to recall an operational interpretation of $E_R^{\infty}$ in the context of quantum hypothesis testing \cite{BP10} (see Section \ref{sectionmonoghamyEr} for more details). Suppose Alice and Bob are given either $n$ copies of an entangled state $\rho_{A:B}$, or an arbitrary separable state in the cut $A^{n} \!:\! B^{n}$.  Suppose further they can only measure a POVM from the class $\Mclass$ in order to find out which of the two hypotheses is the case. Let $p_e(n)$ be the probability that Alice and Bob mistake a separable state for the $n$ copies of $\rho_{A:B}$, minimised over all possible measurements available to them.\footnote{We also require that the measurement only mistakes $\rho_{A:B}^{\otimes n}$ for a separable state with a small probability. See Section \ref{sectionmonoghamyEr} for details.} Then we define the optimal rate function $D_{\Mclass}(\rho_{A:B})$ as follows
\begin{equation}
D_{\Mclass}(\rho_{A:B}) = \lim_{n \rightarrow \infty} - \frac{\log p_e(n)}{n}.
\end{equation}
The main result of Ref. \cite{BP10} implies 
\begin{equation} \label{htfornetangelemnt}
D_{\ALL}(\rho_{A:B}) = E_{R}^{\infty}(\rho_{A:B}),
\end{equation}
i.e.\ the regularised relative entropy of entanglement is the optimal distinguishability rate when trying to 
distinguish many copies of an entangled state from (arbitrary) separable states, in the case where there is no restrictions 
on the measurements available. 

The new monogamy-like inequality we prove for $E_R^{\infty}$, using the connection of $E_R^{\infty}$ with hypothesis testing given by Eq. \eqref{htfornetangelemnt}, is the following:
\begin{lemma}\label{almostmonogamy}
For every state $\rho_{A:BE}$,
\begin{equation} 
E_{R}^{\infty}(\rho_{A:BE}) - E_{R}^{\infty}(\rho_{A:E})  \geq  D_{\leftLOCC}(\rho_{A:B}).
\end{equation}
\end{lemma}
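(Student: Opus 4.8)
The plan is to prove the equivalent inequality $D_{\ALL}(\rho_{A:BE}) \geq D_{\ALL}(\rho_{A:E}) + D_{\leftLOCC}(\rho_{A:B})$, obtained by rewriting the two regularised relative entropies of entanglement on the left-hand side through the hypothesis-testing identity \eqref{htfornetangelemnt} applied in both cuts. Since $D_{\ALL}(\rho_{A:BE})$ is by definition the optimal distinguishing rate attainable by unrestricted measurements, it suffices to exhibit a single sequence of global tests on $A^{n}(BE)^{n}$ whose error exponent against every separable state in the cut $A^{n}\!:\!(BE)^{n}$ is at least $D_{\ALL}(\rho_{A:E}) + D_{\leftLOCC}(\rho_{A:B})$, while still accepting $\rho_{ABE}^{\otimes n}$ with probability tending to one.

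First I would fix, for each $n$, two near-optimal tests: a global test $\{R_n, \id - R_n\}$ on $A^{n}E^{n}$ achieving the rate $D_{\ALL}(\rho_{A:E})$ against $\SEP(A^{n}\!:\!E^{n})$, and a $\leftLOCC$ test $\{S_n, \id - S_n\}$ on $A^{n}B^{n}$ achieving $D_{\leftLOCC}(\rho_{A:B})$ against $\SEP(A^{n}\!:\!B^{n})$. Because the first communication in a $\leftLOCC$ protocol runs from Bob to Alice, I may write $S_n = \sum_j \Pi^{(j)}_A \otimes F^{(j)}_B$, where $\{F^{(j)}\}$ is Bob's first-round POVM on $B^{n}$ and $\Pi^{(j)}$ is Alice's subsequent accept operator on $A^{n}$. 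I would then build the combined global test on $A^{n}(BE)^{n}$ by running these in sequence: Bob measures $B^{n}$ with $\{F^{(j)}\}$ and forwards the outcome, Alice applies her response $\Pi^{(j)}$ on $A^{n}$, and finally the global test $R_n$ is applied to $A^{n}E^{n}$; the combined test accepts only if every stage accepts.

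The decisive point, and the reason the direction of communication matters, is what this sequencing does to a separable input. Writing an arbitrary $\tau \in \SEP(A^{n}\!:\!(BE)^{n})$ as $\tau = \sum_i p_i\, \alpha_i \otimes \gamma_i$ with $\alpha_i$ on $A^{n}$ and $\gamma_i$ on $(BE)^{n}$, Bob's measurement acts only on $B^{n}$, so the post-measurement conditional operator on $A^{n}E^{n}$ is again of the form $\sum_i q_i\, \alpha_i' \otimes \gamma_i'$ and hence remains separable across $A^{n}\!:\!E^{n}$. Applying the composite Stein bound for $R_n$ to this separable operator therefore contributes the factor $2^{-n(D_{\ALL}(\rho_{A:E}) - o(1))}$, and the leftover weight is exactly $\tr(S_n\, \tau_{A^{n}B^{n}})$, to which the $\leftLOCC$ bound for $S_n$ applies since $\tau_{A^{n}B^{n}}$ is separable across $A^{n}\!:\!B^{n}$. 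Multiplying the two estimates yields the additive exponent $D_{\ALL}(\rho_{A:E}) + D_{\leftLOCC}(\rho_{A:B})$, as required for the type-II error.

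It remains to control the type-I error, and this is where I expect the main technical obstacle to lie, since $R_n$ and the $\leftLOCC$ test both act on the shared system $A^{n}$ and need not commute. The plan is to invoke the gentle measurement lemma: because the $\leftLOCC$ test accepts $\rho_{AB}^{\otimes n}$ with probability approaching one, the global post-measurement state stays close in trace norm to $\rho_{ABE}^{\otimes n}$, so the subsequent application of $R_n$ still accepts with probability tending to one; a union bound then shows the combined test accepts $\rho_{ABE}^{\otimes n}$ with probability $\to 1$. Making the various $o(1)$ terms (from the near-optimality of the two tests and from the gentle-measurement disturbance) uniform enough that they wash out in the regularised limit is the step requiring the most care. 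Taking $n \to \infty$ then gives $D_{\ALL}(\rho_{A:BE}) \geq D_{\ALL}(\rho_{A:E}) + D_{\leftLOCC}(\rho_{A:B})$, which is the claimed inequality.
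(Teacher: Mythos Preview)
Your proposal is correct and follows essentially the same route as the paper's proof: combine near-optimal tests for $D_{\ALL}(\rho_{A:E})$ and $D_{\leftLOCC}(\rho_{A:B})$ sequentially, exploit the one-way structure of the $\leftLOCC$ measurement so that the (sub)state handed to the global $A^nE^n$ test remains separable across $A^n\!:\!E^n$, and control the type-I error with the gentle measurement lemma. The paper's only additional device is a Naimark extension of Bob's POVM, which it uses to write the combined accept operator explicitly and to verify that the induced map on $A^nE^n$ is separable; this is exactly what your operational description of the protocol (Bob measures, Alice updates, then the global test is applied) encodes.
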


The third and final step is to obtain a lower bound for $D_{\leftLOCC}$ in terms of the minimum $\leftLOCC$ distance to the set of separable states. We prove a slightly more general result. Then we have

\begin{lemma} \label{lowerboundnorm}
For any $\Mclass \in \{ \rightLOCC, \leftLOCC, \leftrightLOCC, \SEP \}$ and every state $\rho_{A:B}$
\begin{equation}
D_{\Mclass}(\rho_{A:B}) \geq \frac{1}{8 \ln 2} \Norm{\rho_{A:B} - \CS_{A:B}}_{\Mclass}^{2}.
\end{equation}
\end{lemma}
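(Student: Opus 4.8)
The plan is to prove the bound by \emph{exhibiting} a good family of $\Mclass$-measurements for the hypothesis test that defines $D_{\Mclass}$, rather than by computing the optimal rate exactly. Write $\delta := \Norm{\rho_{A:B} - \CS_{A:B}}_{\Mclass}$. The starting point is duality: since $\CS_{A:B}$ is convex and compact and the unit ball of the dual norm is $\{2M-\id : M\in\Mclass\}$, Sion's minimax theorem gives
\begin{equation}
\delta = \min_{\sigma\in\CS_{A:B}}\max_{M\in\Mclass}\tr\big((2M-\id)(\rho-\sigma)\big) = \max_{M\in\Mclass}\min_{\sigma\in\CS_{A:B}}\tr\big((2M-\id)(\rho-\sigma)\big).
\end{equation}
Hence there is a single witness effect $M^*\in\Mclass$ with $\tr(M^*\rho) - \tr(M^*\sigma)\geq \delta/2$ for every separable $\sigma$. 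Setting $p := \tr(M^*\rho)$ and $q := \max_{\sigma\in\CS_{A:B}}\tr(M^*\sigma)$, this reads $q \leq p - \delta/2$.

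First I would build the $n$-copy test. On each of the $n$ copies I apply the two-outcome measurement $\{M^*,\id-M^*\}$, record the indicator $X_i$ of the ``accept'' outcome, and declare ``$\rho^{\otimes n}$'' precisely when the empirical frequency $\frac1n\sum_i X_i$ exceeds the threshold $\tau := p - \delta/4$, which sits strictly between $q$ and $p$. Because each class $\Mclass\in\{\rightLOCC,\leftLOCC,\leftrightLOCC,\SEP\}$ is closed under tensoring and classical post-processing of outcomes, the resulting ``accept'' effect lies in $\Mclass$ for the cut $A^{n}\!:\!B^{n}$, so the test is admissible. Under $\rho^{\otimes n}$ the $X_i$ are i.i.d.\ Bernoulli$(p)$, so by Hoeffding the type-I error $\Pr[\frac1n\sum_i X_i < \tau]\leq e^{-n\delta^2/8}\to 0$, meeting the constraint in the definition of $p_e(n)$.

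The heart of the argument is the type-II bound, uniform over \emph{all} separable $\sigma_{A^n:B^n}$ --- including those correlated across copies. The key technical claim is the multiplicative accept bound
\begin{equation}
\Pr_\sigma[\,X_i=1 \ \forall i\in T\,] = \tr\Big[\Big(\textstyle\bigotimes_{i\in T}M^*\Big)\sigma\Big]\ \leq\ q^{|T|}\qquad\text{for every } T\subseteq\{1,\dots,n\}.
\end{equation}
I would prove this by a sequential argument: since $M^*$ is an $\Mclass$-effect it admits a product/LOCC Kraus realization of the ``accept'' branch, and local operations followed by conditioning map separable states to separable states; measuring the copies in $T$ one at a time and conditioning on ``accept'' therefore leaves the remaining copies separable across $A\!:\!B$, so each conditional accept probability is at most $q$, and multiplying them gives $q^{|T|}$. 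Feeding this into the moment generating function via the exact expansion $\prod_i\big(1+(e^\lambda-1)X_i\big)=\sum_T(e^\lambda-1)^{|T|}\prod_{i\in T}X_i$ yields, for $\lambda>0$,
\begin{equation}
\mathbb{E}_\sigma\big[e^{\lambda\sum_i X_i}\big]\ \leq\ \big(1+(e^\lambda-1)q\big)^{n},
\end{equation}
i.e.\ the accept count is dominated, in the Chernoff sense, by an i.i.d.\ Bernoulli$(q)$ sum. A standard Chernoff/Hoeffding optimisation then gives $p_e(n)=\max_\sigma\Pr_\sigma[\frac1n\sum_i X_i\geq\tau]\leq e^{-2n(\tau-q)^2}\leq e^{-n\delta^2/8}$, since $\tau-q\geq\delta/4$. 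Converting nats to bits, $-\frac1n\log p_e(n)\geq \frac{\delta^2}{8\ln 2}$, which is the claimed inequality.

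I expect the multiplicative accept bound to be the main obstacle, since it is exactly where the restriction of the adversary to \emph{separable} states and of the witness to a \emph{local} ($\Mclass$) effect is indispensable: for a product witness against an arbitrary (trace-norm) adversary the analogous bound fails, and a crude union bound over the subsets $T$ is too lossy to reproduce the tight exponent, so the full moment-generating-function domination --- powered by the separability-preservation of conditioning --- is what makes the constant $\tfrac{1}{8\ln 2}$ come out. A secondary point to verify carefully is that the single-copy test together with its classical post-processing genuinely stays inside each of the four classes, in particular that the round structure of $\rightLOCC$ and $\leftLOCC$ is respected when the copies are processed in sequence.
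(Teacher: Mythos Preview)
Your proposal is correct and follows essentially the same strategy as the paper: obtain a single witness $M^*\in\Mclass$ via minimax (the paper's Lemma~\ref{jain}), apply it independently to each copy, threshold the empirical accept frequency, and control the type-II error using the key observation that conditioning a separable state on the ``accept'' outcome of a separable effect leaves the remaining copies separable. The only difference is the concentration tool: the paper packages the conditional bound $\Pr(Z_i{=}1\mid Z_1,\dots,Z_{i-1})\leq a$ as a supermartingale and invokes Azuma's inequality, whereas you convert the same separability-preservation idea into the multiplicative bound $\Pr_\sigma[X_i{=}1\ \forall i\in T]\leq q^{|T|}$ and push it through the MGF to dominate by i.i.d.\ Bernoulli$(q)$---both routes yield the identical constant $\tfrac{1}{8\ln 2}$.
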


In \cite{Pia09} a similar result was proved: consider the following modification of the relative entropy of entanglement:
\begin{equation}
E_{R, \Mclass}(\rho) := \min_{\sigma \in {\cal S}}  \max_{M \in \Mclass}  S(M(\rho) || M(\sigma)),
\end{equation}
with $M(X) := \sum_k \tr(M_k X) \proj{k}$ for a POVM $M := \{ M_k \}_{k} \in \Mclass$. Then for any class of measurements $\Mclass \in \{\rightLOCC, \leftLOCC, \leftrightLOCC, \SEP \}$ it was shown that \cite{Pia09}
\begin{equation} \label{piani}
E_{R, \Mclass}(\rho_{A:B}) \geq \frac{1}{2 \ln 2}\Norm{\rho_{A:B} - \CS_{A:B}}_{\Mclass}^{2}.
\end{equation}
It is conceivable that 
\begin{equation} \label{eq?}
E_{R, \Mclass}^{\infty}(\rho_{A:B}) = D_{\Mclass}(\rho_{A:B}),
\end{equation}
and if this was the case we could obtain Lemma \ref{lowerboundnorm} from Eq. \eqref{piani}, since $E_{R, \Mclass}$ is superadditive. Indeed, for $\Mclass = \ALL$, Eq. \ref{eq?} holds true and this is precisely the content of Eq. \eqref{htfornetangelemnt}. For restricted classes of POVMs, however, it is not known whether Eq. \eqref{eq?} remains valid and, alas, the techniques of \cite{BP10} do not appear to be directly applicable. We note that one direction is easily seen to hold, namely\footnote{The proof is very similar to the usual argument, which shows $S(\rho || \sigma)$ to be an upper bound on the optimal distinguishability rate in quantum hypothesis testing (see e.g.\ \cite{HP91, ON00}), and thus we omit it.}
\begin{equation} 
E_{R, \Mclass}^{\infty}(\rho_{A:B}) \geq D_{\Mclass}(\rho_{A:B}).
\end{equation}
So we can obtain an asymptotic version of Eq. \eqref{piani} from Lemma \ref{lowerboundnorm}.

\vspace{0.2 cm}
\noindent \textbf{A new construction of entanglement witnesses:} In the proof of Lemma \ref{lowerboundnorm} we derive a relation that might be of independent interest. For any restricted set of POVMs $\Mclass$,
\begin{equation} \label{witness}
2\max_{M \in \Mclass} \left( \tr(M\rho) - \max_{\sigma \in {\cal S}} \tr(M \sigma) \right) = 
\Norm{ \rho_{AB} - \CS_{A:B}}_{\Mclass}. 
\end{equation}
The POVM element maximising the left hand side is a so-called \textit{entanglement witness} \cite{HHHH09, GT09}: its expectation value on the entangled state $\rho$ is bigger than its expectation value on any separable state. Therefore it can be used as a witness to attest the entanglement of $\rho_{AB}$. \eqref{witness} gives a new construction of entanglement witnesses that is attractive in the following respects: 

\begin{itemize}
\item  The witness is a POVM element and thus can be measured with a \textit{single} measurement setting.
\item Every $\Mclass$ that (after rescaling) contains an informationally complete POVM can be used to construct witnesses detecting every entangled state.
\item The maximum gap in expectation value for an entangled state versus separable states, optimised over all witnesses from a class $\Mclass$, is equal to the minimum distance of the entangled state to the set of separable states in the $\Mclass$ norm and thus can also be used to quantify the entanglement of the state (see \cite{Bra05, Bra07} for related results in this direction).
\end{itemize}

\section{Proof of Theorem}
\label{sec:mainproof}

\subsection{Proof of Lemma~\ref{nonlockability}}
The main idea in the proof of Lemma~\ref{nonlockability} is to apply the nonlockability of $E_R$ given by \eqref{nonlockable} to a tensor power state $\rho_{A:BE}^{\ox n}$, but to use an optimal protocol for performing state redistribution protocol to trace out $B$ in a more efficient way.  
A general protocol for state redistribution is outlined in Figure~\ref{fig:redist}.  The following proposition follows immediately from the existence of a redistribution protocol achieving the minimum possible rates of communication and entanglement cost:
\begin{figure}
\begin{center}
\includegraphics[scale=.7]{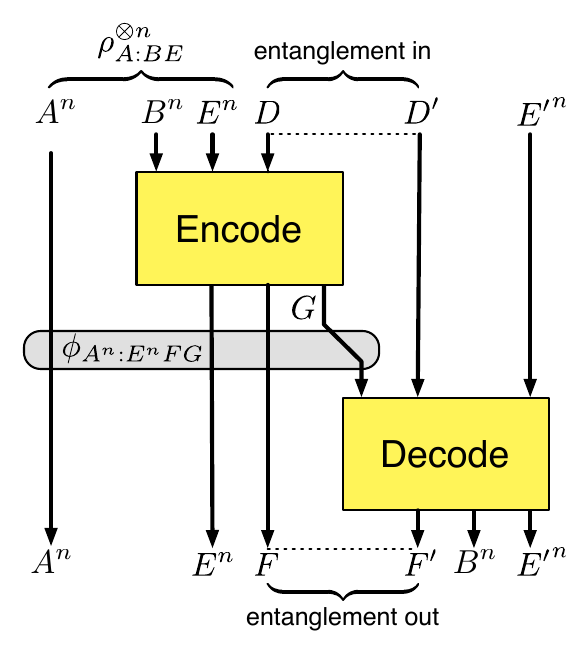}
\end{center}
\caption{General protocol for redistributing the $B$ parts of a pure state $\ket{\psi}_{ABEE'}^{\ox n}$ where the sender holds $E$ and the receiver holds $E'$. The protocol consumes maximal entanglement on $DD'$, transmits the quantum system $G$, and produces maximal entanglement on $FF'$.  The dimensions of $D$, $F$ and $G$ typically depend exponentially on $n$.  The protocol is asymptotically reversible.} 
\label{fig:redist}
\end{figure}

\begin{proposition}[\cite{DY07, YD07}]
There is a sequence of asymptotically reversible encoding maps $\Lambda_n: B^n E^n D \to E^n F G $ such that if 
$\phi_{A^n:E^n FG}= \Lambda_n (\rho_{A:BE}^{\ox n} \ox \tau_D)$ (see {\em Fig.~\ref{fig:redist}}), then
\begin{equation} \label{rate}
\frac{\log |G|}{n} \to  \smfrac{1}{2} I(A;B|E)_{\rho},
\end{equation}
and 
\begin{equation} \label{approx}
\Norm{\phi_{A^nE^nF} - \rho_{AE}^{\ox n} \ox \tau_{F}}_1 \to 0,
\end{equation}
where $\tau_D$ and $\tau_F$ are the maximally mixed states on systems $D$ and $F$, respectively.
\end{proposition}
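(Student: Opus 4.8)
The plan is to recognise the Proposition as the achievability half of the quantum state redistribution theorem and to prove it by the decoupling method. First I would purify, writing $\rho_{ABE} = \tr_{E'}\proj{\psi}_{ABEE'}$, take $n$ independent copies, and hand Alice and Bob a maximally entangled state $\ket{\Phi}_{DD'}$ (Alice holding $D$, Bob $D'$), so that Alice's total register is $B^n E^n D$ and the inaccessible reference is $A^n$. The encoding $\Lambda_n$ will be a unitary applied by Alice to $B^n E^n D$ whose output is relabelled $E^n F G$, with $G$ the system she transmits. The key observation is that the content of the Proposition is exactly the decoupling condition \eqref{approx}: once $\phi_{A^nE^nF}\approx \rho_{AE}^{\ox n}\ox \tau_F$ holds, the global purity of $\phi$ together with Uhlmann's theorem produces a decoding isometry for Bob on $GE'^nD'$ that reconstructs $B^n$ (now on Bob's side) and the partner half $F'$ of the entanglement; thus all that must be established is \eqref{approx} together with the rate \eqref{rate}.

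The heart of the argument is a conditional decoupling estimate. I would have Alice apply a Haar-random (or unitary-two-design) unitary to $B^n D$, chosen so as to exploit the side information $E^n$ she keeps, and then split the image of $B^nD$ into the retained system $F$ and the transmitted system $G$. Because Alice never touches $A^n$ and the encoding is designed to leave the joint state of $A^nE^n$ unchanged, \eqref{approx} reduces to decoupling $F$ from the reference $A^n$ relative to $E^n$ and rendering $F$ maximally mixed. By the decoupling theorem, the expectation over the random unitary of the left-hand side of \eqref{approx} is small as soon as $\log|G| \ge \frac12\big(H(B^nD\,|\,E^n)_\rho - H(B^nD\,|\,A^nE^n)_\rho\big) + o(n)$; since $D$ is maximally mixed and uncorrelated with everything else its contribution cancels in the difference, leaving $\log|G| = \frac{n}{2}\big(H(B|E)_\rho - H(B|AE)_\rho\big) + o(n) = \frac{n}{2} I(A;B|E)_\rho + o(n)$, which is exactly \eqref{rate}. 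The passage from the one-shot (smooth conditional min-/max-entropy) form of the decoupling bound to these von Neumann conditional entropies is supplied by the asymptotic equipartition property applied to the i.i.d.\ state $\rho^{\ox n}$. A single good unitary then exists by the probabilistic method, giving a deterministic $\Lambda_n$ satisfying \eqref{approx} and \eqref{rate}.

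The remaining points are bookkeeping. The retained system $F$ emerges maximally mixed with $\log|F| = \log|B^nD| - \log|G|$, and its Uhlmann partner $F'$ on Bob's side makes $\Phi_{FF'}$, while $\Phi_{DD'}$ is consumed, which is the entanglement ledger depicted in Fig.~\ref{fig:redist}. Asymptotic reversibility follows because the decoupling makes $\Lambda_n$ essentially an isometry up to vanishing error, whose approximate inverse is Bob's decoder; running the two Uhlmann isometries in reverse redistributes $B$ back from Bob to Alice at the same asymptotic rate. Finally, for the convergence in \eqref{rate} to be an equality rather than just an upper bound one needs the matching converse, $\log|G|/n \ge \frac12 I(A;B|E) - o(1)$, which I would obtain from a standard monotonicity argument: the conditional mutual information of the reference with the systems available to the receiver cannot increase under Alice's local encoding and can change by at most $2\log|G|$ when $G$ of that size is transmitted.

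I expect the main obstacle to be the achievability computation rather than the converse: specifically, making correct use of the side information $E$ so that the transmission rate is the conditional mutual information $\frac12 I(A;B|E)$ and not the larger merging rate $\frac12 I(A;BE)$. This is where the conditional form of the decoupling theorem is essential, and it is also where the most care is needed: one must smooth the one-shot bound, invoke the asymptotic equipartition property to recover the exact von Neumann $I(A;B|E)$ in the limit, and simultaneously keep the global state pure and the entanglement ledger balanced so that the resulting $\Lambda_n$ is genuinely reversible.
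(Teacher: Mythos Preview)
The paper does not prove this proposition at all: it is quoted as the achievability part of the Devetak--Yard state-redistribution theorem \cite{DY07,YD07} and used as a black box in the proof of Lemma~\ref{nonlockability}. There is thus no ``paper's own proof'' to compare against.

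Your decoupling sketch is a legitimate route to the result, and the overall architecture --- purify, encode unitarily on Alice's side, verify the decoupling condition~\eqref{approx}, then invoke Uhlmann to produce Bob's decoder, with the converse via monotonicity of mutual information under local operations plus the $2\log|G|$ bound for transmission --- is correct and is essentially how later one-shot treatments proceed. But the specific mechanism you propose does not deliver the rate you claim. A Haar-random unitary acting on $B^nD$ alone, analysed by the standard decoupling theorem with reference $R=A^nE^n$, yields the threshold
\[
\log|G|\ \gtrsim\ \tfrac{1}{2}\big(\log|B^nD|-H(B^nD\,|\,A^nE^n)\big)=\tfrac{n}{2}\big(\log|B|-H(B|AE)\big),
\]
not the $\tfrac{n}{2}\big(H(B|E)-H(B|AE)\big)=\tfrac{n}{2}I(A;B|E)$ that you write down; the gap $\tfrac{n}{2}\big(\log|B|-H(B|E)\big)$ is exactly the savings owed to Alice's side information $E^n$, and a unitary that never touches $E^n$ cannot collect it. The phrase ``a Haar-random unitary to $B^nD$, chosen so as to exploit the side information $E^n$'' is therefore internally inconsistent. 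To reach the conditional rate the encoding must genuinely act on $E^n$ as well (which the proposition permits: $\Lambda_n\colon B^nE^nD\to E^nFG$), for instance by first projecting onto the conditionally typical subspace of $B^n$ given $E^n$ before randomising, or by following Oppenheim's decomposition of redistribution into two merging steps. You correctly flag this as the crux; what is missing is a concrete encoding that actually uses $E^n$.
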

\begin{proof}[{\bf Proof of Lemma~\ref{nonlockability}}]
\begin{eqnarray}
E_R^\infty(\rho_{A:BE}) & = & \lim_{n \rightarrow \infty} \frac{1}{n} E_R(\phi_{A^n : E^nFG}) \\ 
& \leq & \left(\lim_{n \rightarrow \infty} \frac{1}{n}  E_R(\phi_{A^n : E^n F})\right) +  I(A;B|E) \\ 
&= &  \left(\lim_{n \rightarrow \infty} \frac{1}{n}  E_R(\rho^{\ox n}_{A : E} \ox \tau_{F}) \right)+ I(A;B|E)\\ 
&=& E_R^\infty(\rho_{A:E})+ I(A;B|E).  
\end{eqnarray}
The first line holds by asymptotic continuity of $E_R$ \cite{DH99, SH06} and the asymptotic reversibility of the encoding operation.  The second follows from  (\ref{nonlockable}) and (\ref{rate}).  The third line holds again by continuity of $E_R$, applied to the asymptotic equality (\ref{approx}) of the respective arguments of $E_R$.
\end{proof}

The same argument applied to the mutual information instead of the relative entropy of entanglement can be used to prove the converse of state redistribution (see~\cite{Berta10, Berta11} for a similar argument in the context of one-shot quantum state splitting).

\subsection{Proof of Lemma~\ref{almostmonogamy}} \label{sectionmonoghamyEr}
Before proceeding to the proof, we give a precise definition of the rate functions $D_{\Mclass}$. 
\begin{definition} 
Let $\rho_{AB}$ be a state and $\Mclass=\{\rightLOCC, \leftLOCC,  \LOCC, \SEP, \ALL\}$ a class of POVMs. Then we define $D_{\Mclass}(\rho_{A:B})$ as the unique non-negative number such that
\begin{itemize}
\item (\textit{Direct part}): For every  $\epsilon > 0$ there exists a sequence of POVM elements $\{M_{A^nB^n}\}_{n \in \mathbb{N}}$, with $M_{A^nB^n} \in \Mclass$, such that
\begin{equation}
        \lim_{n \rightarrow \infty} \tr(M_{A^nB^n} \rho_{AB}^{\otimes n}) = 1
\end{equation}
and for every $n \in \mathbb{N}$ and $\omega_{A^nB^n} \in {\cal S}_{A^{n}:B^{n}}$,
\begin{equation}
        - \frac{\log \tr(M_{A^nB^n} \omega_{A^nB^n})}{n} \geq 
        D_{\Mclass}(\rho_{A:B}) - \epsilon.
\end{equation}
\item (\textit{Converse}): If a real number $\epsilon > 0$ and a sequence of POVM elements 
$\{ M_{A^nB^n} \}_{n \in \mathbb{N}}$ are such that for every $n \in \mathbb{N}$ and every $\omega_{A^nB^n} \in {\cal S}_{A^{n}:B^{n}}$,
\begin{equation}
         - \frac{\log( \tr(M_{A^nB^n} \omega_{A^nB^n}))}{n}  \geq 
         D_{\Mclass}(\rho_{A:B}) + \epsilon,
\end{equation}
then
\begin{equation}
        \lim_{n \rightarrow \infty} \tr(M_{A^nB^n}\rho_{AB}^{\otimes n}) < 1. 
\end{equation}
\end{itemize}
\end{definition}
\noindent
Recall~\eqref{htfornetangelemnt}:
\begin{proposition}[\cite{BP10}] \label{stein}
For every state $\rho_{A:B}$,
\begin{equation}
D_{\ALL}(\rho_{A:B}) = E_{R}^{\infty}(\rho_{A:B}).
\end{equation}
\end{proposition}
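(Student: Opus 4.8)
The statement is the generalized quantum Stein's lemma of \cite{BP10}, and the plan is to prove the two inequalities $D_{\ALL}(\rho_{A:B})\le E_{R}^{\infty}(\rho_{A:B})$ (converse) and $D_{\ALL}(\rho_{A:B})\ge E_{R}^{\infty}(\rho_{A:B})$ (direct part) separately, exploiting three structural properties of the family $\mathcal{S}_n := \mathcal{S}_{A^n:B^n}$ of separable states: each $\mathcal{S}_n$ is convex and compact and contains a full-rank state; it is invariant under permutations of the copies; and it is closed under tensor products, so that $\omega\in\mathcal{S}_n$ and $\omega'\in\mathcal{S}_m$ imply $\omega\otimes\omega'\in\mathcal{S}_{n+m}$. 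I would record at the outset that $E_R$ is subadditive (if $\sigma,\sigma'$ are optimal for $\rho^{\otimes n},\rho^{\otimes m}$ then $\sigma\otimes\sigma'\in\mathcal{S}_{n+m}$ and relative entropy adds), so that $E_{R}^{\infty}=\inf_m \tfrac1m E_R(\rho^{\otimes m})$ exists by Fekete's lemma.

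For the converse I would block and reduce to ordinary Stein's lemma. By the direct part of the \emph{definition} of $D_{\ALL}$, for every $\epsilon>0$ there are tests $\{M_n\}$ with $\tr(M_n\rho^{\otimes n})\to 1$ and type-II exponent at least $D_{\ALL}-\epsilon$ against every separable state. Fix a block length $m$ and let $\sigma_m^\ast\in\mathcal{S}_m$ attain $E_R(\rho^{\otimes m})=S(\rho^{\otimes m}\|\sigma_m^\ast)$. For $n=mk$ the state $(\sigma_m^\ast)^{\otimes k}$ is separable, so $-\tfrac1k\log\tr\!\big(M_{mk}(\sigma_m^\ast)^{\otimes k}\big)\ge m(D_{\ALL}-\epsilon)$. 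If $m(D_{\ALL}-\epsilon)>S(\rho^{\otimes m}\|\sigma_m^\ast)=E_R(\rho^{\otimes m})$, the strong converse of quantum Stein's lemma (Ogawa--Nagaoka) forces $\tr(M_{mk}\rho^{\otimes mk})\to 0$ as $k\to\infty$, contradicting $\tr(M_n\rho^{\otimes n})\to 1$. Hence $D_{\ALL}\le\tfrac1m E_R(\rho^{\otimes m})$ for every $m$, and the infimum over $m$ gives $D_{\ALL}\le E_{R}^{\infty}$.

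For the direct part I would use a Neyman--Pearson test against the optimiser. Let $\sigma_n\in\mathcal{S}_n$ minimise $S(\rho^{\otimes n}\|\cdot)$ and set $A_n := \{\rho^{\otimes n}-2^{nR}\sigma_n\ge 0\}$ with $R=E_{R}^{\infty}-\delta$. The uniform control of the type-II error over the \emph{whole} set $\mathcal{S}_n$ is what makes $E_{R}^{\infty}$ (rather than a single relative entropy) appear, and it comes from the first-order optimality condition for $\sigma_n$: differentiating $S(\rho^{\otimes n}\|(1-t)\sigma_n+t\omega)$ at $t=0$ and using $\int_0^\infty(\sigma_n+s)^{-1}\sigma_n(\sigma_n+s)^{-1}\,ds=\id$ yields, for every $\omega\in\mathcal{S}_n$,
\[
\tr(\omega\,\Omega_n)\le 1,\qquad \Omega_n := \int_0^\infty (\sigma_n+s)^{-1}\rho^{\otimes n}(\sigma_n+s)^{-1}\,ds .
\]
If $\rho^{\otimes n}$ and $\sigma_n$ commuted, then $\Omega_n=\sigma_n^{-1}\rho^{\otimes n}$ is the likelihood ratio, on the range of $A_n$ one has $\sigma_n^{-1}\rho^{\otimes n}\ge 2^{nR}$, and therefore $\tr(A_n\omega)\le 2^{-nR}\tr(\omega\,\Omega_n)\le 2^{-nR}$ \emph{simultaneously} for all separable $\omega$, which is exactly the exponent we want. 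The type-I error is handled by the direct part of Stein's lemma for the pair $(\rho^{\otimes n},\sigma_n)$: since $\tfrac1n S(\rho^{\otimes n}\|\sigma_n)=\tfrac1n E_R(\rho^{\otimes n})\to E_{R}^{\infty}>R$, an information-spectrum/concentration argument gives $\tr(A_n\rho^{\otimes n})\to 1$.

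The hard part is that $\rho^{\otimes n}$ and the optimal $\sigma_n$ do \emph{not} commute, and $\sigma_n$ need not be i.i.d., so neither the clean likelihood-ratio bound above nor the concentration of $\tfrac1n(\log\rho^{\otimes n}-\log\sigma_n)$ is immediate. I would overcome this in two ways. First, since $\rho^{\otimes n}$ and $\mathcal{S}_n$ are permutation-invariant, $\sigma_n$ and $A_n$ may be taken permutation-invariant, and a de~Finetti (exponential de~Finetti) bound lets one approximate any symmetric separable $\omega$ by mixtures of near-i.i.d. states, reducing the worst case over $\mathcal{S}_n$ to states for which the operator inequality $\tr(\omega\,\Omega_n)\le 1$ can be converted into a genuine exponent up to subexponential corrections. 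Second, the non-commutativity in passing from $\Omega_n$ to the spectral projector $A_n$ (and in the type-I concentration) is absorbed by pinching $\rho^{\otimes n}$ into the eigenbasis of $\sigma_n$, whose number of distinct eigenvalue blocks on the symmetric subspace is only $\poly(n)$, so pinching costs only a polynomial factor that washes out in the rate. Combining the uniform type-II exponent $R=E_{R}^{\infty}-\delta$ with $\tr(A_n\rho^{\otimes n})\to 1$ and letting $\delta\to 0$ gives $D_{\ALL}\ge E_{R}^{\infty}$, completing the proof.
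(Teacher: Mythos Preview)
The paper does not prove this proposition: it is quoted as a known result from \cite{BP10} (the generalised quantum Stein's lemma) and used as a black box in the proof of Lemma~\ref{almostmonogamy}. So there is no ``paper's own proof'' to compare against.

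As an outline of the argument in \cite{BP10}, your converse $D_{\ALL}\le E_R^\infty$ is clean and correct: blocking, testing against the product of optimisers $(\sigma_m^\ast)^{\otimes k}$, and invoking the strong converse of quantum Stein's lemma is exactly the standard route and needs no further justification.

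The direct part is where the real work lies, and your sketch correctly names the ingredients --- the first-order optimality condition $\tr(\omega\,\Omega_n)\le 1$ for all $\omega\in\mathcal{S}_n$, permutation symmetry, exponential de~Finetti, and pinching --- but glosses over precisely the step that makes \cite{BP10} nontrivial. The gap is this: in the non-commuting case the operator $\Omega_n$ is \emph{not} dominated below by $2^{nR}A_n$, so the clean chain $\tr(A_n\omega)\le 2^{-nR}\tr(\omega\,\Omega_n)\le 2^{-nR}$ does not survive. Your proposed fix --- pinch and use that the symmetric $\sigma_n$ has $\poly(n)$ eigenvalue blocks --- controls the cost of pinching $\rho^{\otimes n}$ relative to $\sigma_n$, but it does not by itself convert the integral operator $\Omega_n$ into a spectral lower bound uniformly over all separable $\omega$; nor does it supply the needed one-shot concentration for the (non-i.i.d.) pair $(\rho^{\otimes n},\sigma_n)$. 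In \cite{BP10} the exponential de~Finetti theorem is used in a more essential way than you indicate: it is applied to the \emph{measurement}/state structure to reduce to almost-power states for which an information-spectrum relative-entropy argument can be run, and the polynomial overheads are then absorbed. That reduction is the technical heart, and your sketch does not show how the uniform type-II bound over $\mathcal{S}_n$ actually emerges from it. As written, the direct part is a plausible plan rather than a proof.
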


 The main application of Proposition~\ref{stein} so far has been to prove that the asymptotic manipulation of entanglement under operations that, approximately, do not generate entanglement is reversible: both the distillation and a cost functions in this paradigm are equal to the regularised relative entropy of entanglement \cite{BP08, BP10b, BD09}. In fact, one can show that Proposition~\ref{stein} is \textit{equivalent} to such reversibility under non-entangling operations, in the sense that the latter can be used to prove the former \cite{Bra10}. 

The proof of Lemma~\ref{almostmonogamy} will utilise the special structure of LOCC measurements with one-way communication:
\begin{lemma}
Any operator in $\leftLOCC$ necessarily has the form   
\begin{align} \sum_k M_{A, k} \ox N_{B, k},\end{align}
where $0 \leq M_{A, k} \leq \mathbb{I}_A$,  $0\leq N_{B, k}$ and $\sum_k N_{B, k} \leq \mathbb{I}_B$.
\label{lem:LOCCmeasurement}
\end{lemma}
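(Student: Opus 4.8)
The plan is to unpack the definition of a one-round $\leftLOCC$ protocol --- in which the single round of communication runs from Bob to Alice --- and simply read off the accept effect $M$. In such a protocol Bob acts first: he applies a local measurement to his system $B$, obtains a classical outcome $k$, and forwards $k$ to Alice; Alice then applies a two-outcome measurement to $A$ that may depend on $k$, and the pair declares ``accept'' according to some post-processing of the outcomes. For each value of $k$ let $M_{A,k}$ denote the effect on $A$ corresponding to ``accept,'' so that $0 \leq M_{A,k} \leq \id_A$ by construction.

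The key steps are as follows. First, although Bob's local operation is in general an instrument (a family of completely positive maps with both a quantum and a classical output), his system is never touched again in a one-round protocol, so only the induced POVM $\{N_{B,k}\}$ --- obtained by applying the dual of each branch to the identity --- is relevant for the probability of accepting, and it satisfies $N_{B,k} \geq 0$ and $\sum_k N_{B,k} = \id_B$. Second, for an arbitrary input $\rho_{AB}$ the accept probability factorises: conditioned on Bob reporting $k$, the (subnormalised) state passed to Alice is $\sigma_{A,k} := \tr_B\big((\id_A \ox N_{B,k})\rho_{AB}\big)$, on which her accept probability is $\tr(M_{A,k}\,\sigma_{A,k}) = \tr\big((M_{A,k}\ox N_{B,k})\rho_{AB}\big)$, whence
\begin{equation}
\tr(M \rho_{AB}) = \sum_k \tr\big((M_{A,k} \ox N_{B,k})\, \rho_{AB}\big).
\end{equation}
Since $\rho_{AB}$ is arbitrary, this forces $M = \sum_k M_{A,k} \ox N_{B,k}$, exactly the claimed form; the completeness relation $\sum_k N_{B,k} = \id_B$ in particular gives $\sum_k N_{B,k} \leq \id_B$, and dropping the terms with $M_{A,k}=0$ keeps the inequality.

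The only real subtlety --- and hence the main point to argue carefully --- is that the generic first round of an LOCC protocol is a priori richer than the clean measure-and-send picture: Bob could use an instrument with quantum output, and Alice could perform local pre-processing before the message $k$ arrives. Both can be removed without changing the induced effect $M$. Bob's quantum output is irrelevant precisely because $B$ is never acted upon again, so each branch map may be replaced by its associated POVM element $N_{B,k}$; and Alice's pre-processing can be deferred, since she holds no information about $k$ beforehand and cannot signal to Bob, so any pre-measurement together with her $k$-dependent post-measurement collapses into a single two-outcome measurement with accept effect $M_{A,k}$ obeying $0 \leq M_{A,k} \leq \id_A$. With these reductions in place the factorisation above applies verbatim, establishing the lemma.
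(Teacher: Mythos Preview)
Your proof is correct and follows essentially the same approach as the paper's: identify Bob's local POVM $\{N_{B,k}\}$ and Alice's $k$-dependent accept effect $M_{A,k}$, then read off the accept operator as $\sum_k M_{A,k}\ox N_{B,k}$. The paper's proof is much terser---it simply asserts the general form $\{M_{A,a|b}\ox N_{B,b}\}$ of a Bob-to-Alice one-way LOCC measurement and notes that the convex hull of such elements has the claimed shape---whereas you supply the justifications (why Bob's post-measurement quantum output is irrelevant, why Alice's pre-processing can be deferred) that the paper leaves implicit.
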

\begin{proof}
Any LOCC measurement consisting of communication only from Bob to Alice has the general form $\{M_{A, a|b} \ox N_{B, b}\}$, where  $0\leq M_{A, a|b}$, $0\leq N_{B, b}$, $\sum_a M_{A, a|b} \leq \mathbb{I}_A$ and $\sum_b N_{B, b} \leq \mathbb{I}_B$.  The convex hull of such POVM elements consists of all operators of the form given in the lemma. 
\end{proof}
\begin{proof}[{\bf Proof of Lemma~\ref{almostmonogamy}}.] 
The main idea in this proof is to start with measurements for $\rho_{A:E}$ and $\rho_{A:B}$ that respectively achieve $D_{\ALL}(\rho_{A:E})$ and $D_{\leftLOCC}(\rho_{A:B})$, and to construct a measurement for $\rho_{A:BE}$ distinguishing it from separable states at a rate given by the sum of the individual rates. The lemma then follows from Proposition~\ref{stein}. Here, it is crucial that we only use one-way LOCC measurements to discriminate $\rho_{A:B}$ from separable states, since only in this case are we able to make sure that the composed measurement for $\rho_{A:BE}$ acts as it should. 

Let $\delta > 0$ and fix two $\delta$-optimal sequences of POVM elements: an unrestricted sequence $Q_{A^nE^n}$ of POVM elements for distinguishing $\rho_{A:E}$ from separable states, and a sequence $S_{A^nB^n}$ of one-way LOCC (with classical communication from Bob to Alice) elements for distinguishing $\rho_{A:B}$ from separable states.  
This means that 
\begin{equation} 
\tr \left( Q_{A^nE^n} \rho^{\ox n}_{A:E} \right) \to 1 \text{ and } \tr \left( S_{A^nB^n} \rho^{\ox n}_{A:B} \right) \to 1,
 \label{projectwellSonrho}
\end{equation}
while 
\begin{equation}
-\frac{\log \tr \left( Q_{A^nE^n} \om^n_{A:E} \right)}{n} \geq D_{\ALL}(\rho_{A:E}) - \delta
\label{decay1ALL}
\end{equation}
for every state $\om_{A^n:E^n}$ separable across $A^n\!\!:\!E^n$ and 
\begin{equation}
-\frac{\log \tr \left( S_{A^nB^n} \om_{A^n:B^n} \right)}{n} \geq D_{\leftLOCC}(\rho_{A:B}) - \delta
\label{decay1LOCC}
\end{equation}
for every state $\om_{A^n:B^n}$ separable across $A^n\!\!:\!B^n$.
We will use these to construct a sequence of POVM elements $T_{A^nB^nE^n}$ that distinguish $\rho_{A:BE}^{\otimes n}$ from separable states $\om_{A^n:B^nE^n}$, satisfying 
\begin{equation} \label{final_rho_condition}
\lim_{n \rightarrow \infty} \tr(T_{A^nB^nE^n} \rho_{A:BE}^{\otimes n}) = 1
\end{equation}
and 
\begin{equation}
\lim_{n \rightarrow \infty} - \frac{\log \tr(T_{A^nB^nE^n} \omega_{A^n:B^nE^n})}{n} \geq D_{\ALL}(\rho_{A:E}) + D_{\leftLOCC}(\rho_{A:B}).
\label{final_om_condition}
\end{equation}
Because the left-hand-side is a lower bound to the maximal achievable error rate $D_{\ALL}(\rho_{A:BE})$, Proposition~\ref{stein} then gives the desired inequality 
\begin{equation}
E_{R}^{\infty}(\rho_{A:BE}) \geq E_{R}^{\infty}(\rho_{A:E}) + D_{\leftLOCC}(\rho_{A:B}).
\nonumber
\end{equation}

To construct $T_{A^nB^nE^n}$, first observe that by Lemma~\ref{lem:LOCCmeasurement}, we can write the $\leftLOCC$ operator $S_{A^nB^n}$ in the form
\begin{equation}
S_{A^nB^n}=\sum_{k} M_{A^n, k} \otimes N_{B^n, k},
\end{equation}
with $0\leq M_{A^n, k} \leq \id$, $0\leq N_{B^n, k}$ and $\sum_k N_{B^n, k} \leq \id$.  
It will be useful to consider a Naimark extension of the measurement on $B$, consisting of orthogonal projections $P_{B^n\wh{B}, k}$ such that 
$\bra{0_{\wh{B}}}P_{B^n\wh{B}, k}\ket{0_{\wh{B}}} = N_{B^n\wh{B}, k}$.
Define 
\begin{equation} \label{formR}
R_{A^{n}B^{n}\wh{B}} := \sum_k M_{A^n, k} \otimes P_{B^n\wh{B}, k}, 
\end{equation}
which by construction is such that $0 \leq R_{A^{n}B^{n}\wh{B}} \leq \id$.
Define 
\begin{align}
T_{A^nB^nE^n} & := \BRa{0_{\wh{B}}}  \sqrt{R_{A^nB^n\wh{B}}}  \,\, Q_{A^nE^n} \sqrt{R_{A^nB^n\wh{B}}} \KEt{0_{\wh{B}}} \label{eqnQ1} \\
& = \sum_k \left(\sqrt{M_{A^n, k}} \,\,Q_{A^nE^n} \sqrt{M_{A^n, k}}\right) \ox N_{B^n, k}. \label{eqnQ2}
\end{align}
Note that we omit identity operators from our notation here and in the remainder of this proof.  There should be no room for confusion, as the global Hilbert space will always be clear, and we use superscripts to indicate on which "local systems" each operator acts.  For instance, 
$Q_{A^n E^n} \equiv Q_{A^n E^n} \ox \id_{B^n \wh{B}}$ in Eq.\ (\ref{eqnQ1}), while $M_{A^n,k} \equiv M_{A^n,k} \ox \id_{E^n}$ in Eq.\ (\ref{eqnQ2}).

Observe that $0 \leq T_{A^nB^nE^n} \leq \mathbb{I}$ because the same holds for $Q_{A^nE^n}$ and $R_{A^nB^n\wh{B}}$. 
We will now verify the conditions (\ref{final_rho_condition}) and (\ref{final_om_condition}) required to complete the proof.  
For proving \eqref{final_rho_condition}, we first write
\begin{equation} \label{auxmainproof1}
\tr(T_{A^nB^nE^n} \rho_{A:BE}^{\otimes n}) = \tr \left(Q_{A^nE^n}   \sqrt{R_{A^nB^n\wh{B}}} \left(\rho_{A:BE}^{\otimes n} \otimes \proj{0}_{\wh{B}} \right)\!\sqrt{R_{A^nB^n\wh{B}}} \, \right).
\end{equation}
From \eqref{projectwellSonrho}
\begin{equation}
\lim_{n \rightarrow \infty} \tr \Big(R_{A^nB^n\wh{B}} \left(\rho_{A:BE}^{\otimes n} \otimes \proj{0}_{\wh{B}} \right) \Big)= \lim_{n \rightarrow \infty} \tr(S_{A^nB^n} \rho_{A:B}^{\otimes n}) = 1.
\end{equation}
Hence by the gentle measurement lemma \cite{Win99,ON02},
\begin{equation}
\lim_{n \rightarrow \infty} \NORM{\rho_{A:BE}^{\otimes n} \otimes \proj{0}_{\wh{B}}  - \sqrt{R_{A^nB^n\wh{B}}} \left(\rho_{A:BE}^{\otimes n} \otimes\proj{0}_{\wh{B}} \right) \!\sqrt{R_{A^nB^n\wh{B}}} \,}_1 = 0
\end{equation}
and so \eqref{auxmainproof1} and \eqref{projectwellSonrho} give
\begin{align}
\lim_{n \rightarrow \infty} & \tr(T_{A^nB^nE^n} \rho_{A:BE}^{\otimes n}) 
\geq \lim_{n \rightarrow \infty} \tr \Big( Q_{A^nE^n}  \left(\rho_{A:BE}^{\otimes n} \otimes \proj{0}_{\wh{B}}  \right) \Big) \nonumber \\ 
 & \quad - \lim_{n \rightarrow \infty} \NORM{\rho_{A:BE}^{\otimes n} \otimes\proj{0}_{\wh{B}}  -  \sqrt{R_{A^nB^n\wh{B}}}  \left(\rho_{A:BE}^{\otimes n} \otimes\proj{0}_{\wh{B}} \right)  \!\sqrt{R_{A^nB^n\wh{B}}} \,}_1 \nonumber \\ 
&= 1.
\end{align}
Let us now prove \eqref{final_om_condition}. We write
\begin{equation} \label{productmain}
\tr(T_{A^nB^nE^n} \omega_{A^n:B^nE^n}) = \tr \left(Q_{A^nE^n} \tilde{\omega}_{A^n:E^n} \right) \tr \Big(R_{A^nB^n\wh B} \left(\omega_{A^n:B^nE^n} \otimes \proj{0}_{\wh{B}}  \right)\Big).
\end{equation}
with
\begin{equation}
\tilde{\omega}_{A^n:E^n} :=\frac{ \tr_{B^{n}\wh{B}}\Big(\sqrt{R_{A^nB^n\wh B}}  \left(\omega^n_{A:BE} \otimes \proj{0}_{\wh{B}}  \right)\sqrt{R_{A^nB^n\wh B}}\,\, \Big) }{ \tr \Big(R_{A^nB^n\wh B}  \left(\omega_{A^n:B^nE^n} \otimes\proj{0}_{\wh{B}}  \right)\Big)}.
\end{equation}
We claim that $\tilde{\omega}_{A^n:E^n}$ is separable in the cut $A^{n}\!:\!E^{n}$ whenever $\omega_{A^n:B^nE^n}$ is separable in 
the cut $A^{n}\!:\!B^{n}E^{n}$. Then \eqref{productmain} will imply \eqref{final_om_condition}because
\begin{eqnarray}
\lim_{n \rightarrow \infty} - \frac{\log \tr(T_{A^nB^nE^n} \omega_{A^n:B^nE^n})}{n} &=& \lim_{n \rightarrow \infty} - \frac{\log \tr(Q_{A^nE^n} \tilde{\omega}_{A^n:E^n})}{n} + \lim_{n \rightarrow \infty} - \frac{\log \tr(S_{A^nB^n} \omega_{A^n:B^n})}{n}  \nonumber \\ &\geq & E_{R}^{\infty}(\rho_{A:E}) + D_{\leftLOCC}(\rho_{A:B}) - 2 \delta,
\end{eqnarray}
where we used Proposition~\ref{stein} and \eqref{decay1ALL}) and  \eqref{decay1LOCC}.

To show that $\tilde{\omega}_{A^n:E^n}$ is separable, we now prove that  the completely-positive trace-non-increasing map $\Lambda\colon A^nB^nE^n \to A^nE^n$  defined as
\begin{equation}
\Lambda(X_{A^nB^nE^n}) := \tr_{B^{n}\wh{B}}\left(\sqrt{R_{A^nB^n\wh B}} \left(X_{A^nB^nE^n} \otimes\proj{0}_{\wh{B}} \right)\sqrt{R_{A^nB^n\wh B}}\,\, \right)
\end{equation}
is a separable superoperator. From the form of $R_{A^nB^n\wh{B}}$ given in \eqref{formR} we can write 
\begin{equation} \label{separablemap}
\Lambda(X_{A^nB^nE^n})
 :=  \sum_{k, k'}\tr_{B^{n}\wh{B}}\left[\left(\sqrt{M_{A^n,k}}\otimes P_{B^n\wh B,k}\right)  \left(X_{A^nB^nE^n} \otimes\proj{0}_{\wh{B}}  \right)\left(\sqrt{M_{A^n, k'}}\otimes P_{B^n\wh B, k'}\right)\right].\nn
\end{equation}
Then, taking the partial trace over $B^{n}\wh{B}$ we find that the cross terms in \eqref{separablemap} are zero since the projectors $P_{B^n\wh B, k}$ are mutually orthogonal and therefore
\begin{equation}
\Lambda (X_{A^nB^nE^n})
  := \sum_{k}\tr_{B^{n}\wh{B}}\left[\left(\sqrt{M_{A^n,k}}\otimes P_{B^n\wh B, k}\right)   \left(X_{A^nB^nE^n} \otimes\proj{0}_{\wh{B}}  \right)\left(\sqrt{M_{A^n,k}}\otimes P_{B^n\wh B, k}\right)\right], \nn
\end{equation}
which is manifestly a separable superoperator. This implies that $\tilde{\omega}_{A^n:E^n}=\Lambda(\omega_{A^n:B^nE^n})$ is separable, since $\omega_{A^n:B^nE^n}$ is separable.
\end{proof}

\subsection{Proof of Lemma~\ref{lowerboundnorm}}

The final step is now to prove Lemma \ref{lowerboundnorm}.  For that we construct a particular protocol for distinguishing many copies of an entangled state from arbitrary separable states, with an exponentially small probability of mistaking a separable state for many copies of a particular entangled state. In Corollary II.2 of \cite{BP10}, a similar result was shown using the exponential de Finetti theorem \cite{Ren07}. Here we give a simpler proof with a stronger bound on the distinguishability rate.

A useful result we will employ is the following simple extension of \cite{Jai05}, which in turn is a consequence of the minimax theorem, which we quote below the proof.
\begin{lemma} \label{jain}
Let $\Mclass$ be a closed convex set of POVM elements and ${\cal C}$ a closed convex set of states. Then given $\rho \notin {\cal C}$,
\begin{equation}
2\max_{M \in \Mclass} \left( \tr(M\rho) - \max_{\sigma \in {\cal C}} \tr(M \sigma) \right) = \norm{ \rho - \mathcal{C}}_{\Mclass}  
\end{equation}
\end{lemma}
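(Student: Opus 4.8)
The plan is to recognise the claimed identity as a direct instance of the minimax theorem applied to a bilinear form on a product of two compact convex sets. The first step is to simplify the quantity $\tr((2M-\id)X)$ that defines $\norm{X}_\Mclass$ when $X = \rho - \sigma$ for $\sigma \in \CC$. Since $\rho$ and every $\sigma \in \CC$ are states, $\tr(\rho - \sigma) = 0$, so
\[
\tr\big((2M-\id)(\rho-\sigma)\big) = 2\tr\big(M(\rho-\sigma)\big).
\]
Introducing the bilinear form $g(M,\sigma) := 2\big(\tr(M\rho) - \tr(M\sigma)\big)$, this gives the representation
\[
\norm{\rho - \CC}_\Mclass = \min_{\sigma \in \CC} \norm{\rho - \sigma}_\Mclass = \min_{\sigma \in \CC}\, \max_{M \in \Mclass} g(M,\sigma),
\]
using the definition of the norm from~\eqref{eq:norm} together with $\mathsf{D} = \{2M-\id : M \in \Mclass\}$.

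The second step is to put the left-hand side of the lemma into the opposite order of optimisation. Pulling the inner maximisation into the difference, $\tr(M\rho) - \max_{\sigma\in\CC}\tr(M\sigma) = \min_{\sigma\in\CC}\big(\tr(M\rho) - \tr(M\sigma)\big)$, so that
\[
2\max_{M\in\Mclass}\Big(\tr(M\rho) - \max_{\sigma\in\CC}\tr(M\sigma)\Big) = \max_{M\in\Mclass}\, \min_{\sigma\in\CC} g(M,\sigma).
\]
At this point the lemma has been reduced to the single assertion $\max_{M}\min_{\sigma} g = \min_{\sigma}\max_{M} g$, i.e.\ that the order of the two optimisations may be exchanged.

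The final step is to verify the hypotheses of the minimax theorem (quoted below the proof) and invoke it. The form $g$ is affine --- indeed linear --- in each argument separately when the other is held fixed, hence in particular concave in $M$ and convex in $\sigma$, and it is jointly continuous. The set $\CC$ is closed, convex, and bounded, being a set of density matrices, hence compact; likewise $\Mclass$ is closed, convex, and bounded, since every $M \in \Mclass$ satisfies $0 \leq M \leq \id$, hence compact. These are exactly the conditions under which the minimax theorem permits exchanging $\min_\sigma$ and $\max_M$, yielding the claimed equality. I expect no genuine obstacle here: the only substantive observations are the trace-one cancellation $\tr(\rho-\sigma)=0$, which aligns the factor of $2$ on both sides, and the compactness/affineness checks needed to license the swap. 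The hypothesis $\rho \notin \CC$ is not needed for the identity itself --- it merely guarantees that the common value is strictly positive, which is the regime of interest for constructing an entanglement witness.
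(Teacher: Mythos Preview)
Your proof is correct and follows essentially the same route as the paper's: both reduce the identity to an exchange of $\max_M$ and $\min_\sigma$ for the bilinear function $\tr(M\rho)-\tr(M\sigma)$ and invoke the minimax theorem on the compact convex sets $\Mclass$ and $\CC$. You are slightly more explicit than the paper in isolating the trace-zero cancellation $\tr(\rho-\sigma)=0$ that produces the factor of $2$, and in noting that the hypothesis $\rho\notin\CC$ is not actually used in the argument, but the substance is the same.
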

\begin{proof}
Following Jain \cite{Jai05}, let the function $u : (\Mclass, {\cal C}) \rightarrow \mathbb{R}$ be defined as $u(M, \sigma) = \tr(M\rho) - \tr(M \sigma)$. The sets $\Mclass$ and ${\cal  C}$ are compact convex subsets of the (real) vector space of Hermitian matrices. Since the function $u$ is furthermore affine in both arguments, the conditions of the minimax theorem are satisfied (see Proposition \ref{minimax} just below) and we find
\begin{eqnarray}
2\max_{M \in \Mclass} \left( \tr(M\rho) - \max_{\sigma \in {\cal C}} \tr(M \sigma) \right) &=& 2\max_{M \in \Mclass} \min_{\sigma \in {\cal C}} \left( \tr(M\rho) - \tr(M \sigma) \right)  \\ &=&  2\min_{\sigma \in {\cal C}} \max_{M \in \Mclass} \left( \tr(M \rho) - \tr(M \sigma) \right)  \\ &=&  \min_{\sigma \in {\cal C}} \Vert \rho - \sigma \Vert_{\Mclass}. 
\end{eqnarray}
\end{proof}

\begin{proposition} \label{minimax}
(Minimax theorem) Let $\CC_1, \CC_2$ be non-empty, convex and compact subsets of $\mathbb{R}^{n}$ and  $\mathbb{R}^{m}$, respectively. Let $u : \CC_1 \times \CC_2 \rightarrow \mathbb{R}$ be convex in the first argument and concave in the second and continuous in both. Then,
\begin{equation}
\min_{x_1 \in \CC_1} \max_{x_2 \in \CC_2}  u(x_1, x_2)= \max_{x_2 \in \CC_2}  \min_{x_1 \in \CC_1}  u(x_1, x_2). 
\end{equation}
\end{proposition}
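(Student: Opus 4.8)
The plan is to prove the asserted equality by establishing its two inequalities separately and then identifying where each hypothesis is used. \textbf{The easy direction} is weak duality, which needs no convexity at all: for every pair $(x_1,x_2)\in \CC_1\times \CC_2$ we have $\min_{x_1'\in\CC_1} u(x_1',x_2)\leq u(x_1,x_2)\leq \max_{x_2'\in\CC_2} u(x_1,x_2')$, so that $g(x_2):=\min_{x_1'} u(x_1',x_2)\leq f(x_1):=\max_{x_2'} u(x_1,x_2')$ for all $x_1,x_2$. The inner extrema here are attained because $u$ is continuous and $\CC_1,\CC_2$ are compact. Maximizing the left side over $x_2$ and minimizing the right side over $x_1$ yields $\max_{x_2}\min_{x_1} u \leq \min_{x_1}\max_{x_2} u$.

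\textbf{The hard direction} is the reverse inequality, and this is where convexity, concavity and compactness are genuinely needed; it is the main obstacle. My approach is to produce a \emph{saddle point} $(x_1^*,x_2^*)$, i.e.\ a point with $u(x_1^*,x_2)\leq u(x_1^*,x_2^*)\leq u(x_1,x_2^*)$ for all $x_1\in\CC_1,\,x_2\in\CC_2$. Existence of such a point instantly closes the gap, since it forces $\min_{x_1}\max_{x_2} u \leq \max_{x_2} u(x_1^*,x_2) = u(x_1^*,x_2^*) = \min_{x_1} u(x_1,x_2^*) \leq \max_{x_2}\min_{x_1} u$, which combined with the easy direction gives equality.

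To construct the saddle point I would use the best-response correspondences $\Phi_1(x_2):=\operatorname{argmin}_{x_1\in\CC_1} u(x_1,x_2)$ and $\Phi_2(x_1):=\operatorname{argmax}_{x_2\in\CC_2} u(x_1,x_2)$. Continuity of $u$ with compactness of $\CC_1,\CC_2$ makes each of these sets nonempty and compact; convexity of $u(\cdot,x_2)$ makes $\Phi_1(x_2)$ convex, while concavity of $u(x_1,\cdot)$ makes $\Phi_2(x_1)$ convex. The product correspondence $F(x_1,x_2):=\Phi_1(x_2)\times \Phi_2(x_1)$ on the nonempty, compact, convex set $\CC_1\times\CC_2\subseteq\mathbb{R}^{n+m}$ then has nonempty, convex, compact values and, by Berge's maximum theorem, upper hemicontinuous with closed graph. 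Kakutani's fixed-point theorem supplies $(x_1^*,x_2^*)\in F(x_1^*,x_2^*)$, and unpacking this says precisely that $x_1^*$ minimizes $u(\cdot,x_2^*)$ and $x_2^*$ maximizes $u(x_1^*,\cdot)$, i.e.\ the saddle-point property holds.

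I expect the crux to be verifying the hypotheses of the fixed-point theorem: the upper hemicontinuity (closed graph) of the $\operatorname{argmin}/\operatorname{argmax}$ maps, where continuity of $u$ is indispensable, and the convexity of the best-response values, where the convex/concave structure enters. An alternative route avoiding Kakutani is a separating-hyperplane argument: prove the bilinear matrix-game minimax theorem on convex hulls of finite subsets by elementary separation (or LP duality), then pass to the limit using compactness and uniform continuity of $u$. I would resort to this only if a treatment independent of Brouwer/Kakutani were desired; in either case the entire content lies in the hard inequality, the easy one being automatic.
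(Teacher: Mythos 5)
Your proof is correct, but there is nothing in the paper to compare it against: the paper simply \emph{quotes} Proposition~\ref{minimax} as a classical fact (it is the finite-dimensional convex--concave minimax theorem of von Neumann--Ky Fan--Sion type) and never proves it, since all that is used in Lemma~\ref{jain} is the equality of the two values. Your argument is the standard fixed-point proof and it is sound: weak duality $\max_{x_2}\min_{x_1} u \leq \min_{x_1}\max_{x_2} u$ needs only attainment of the inner extrema (continuity plus compactness); the reverse inequality follows from a saddle point, which you obtain by applying Kakutani to the best-response correspondence $F(x_1,x_2)=\Phi_1(x_2)\times\Phi_2(x_1)$, with nonemptiness and compactness of values from continuity and compactness, convexity of values from the convex/concave hypotheses, and upper hemicontinuity from Berge's maximum theorem (the constraint correspondence being constant, hence continuous). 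Note that this route actually proves strictly more than the proposition asserts, namely the \emph{existence} of a saddle point, whereas the equality of values is all the paper needs; your sketched alternative via separation or LP duality on finite subsets plus a compactness limit would match the statement more economically and avoid Brouwer-strength machinery. One small point worth making explicit: you should read the hypothesis ``continuous in both'' as \emph{joint} continuity on $\CC_1\times\CC_2$ (which is clearly the intended meaning), since Berge's theorem requires joint continuity of $u$; separate continuity alone would not suffice for your argument as written.
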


\vspace{0.3 cm}
In the proof of Lemma~~\ref{lowerboundnorm} we will also make use of Azuma's inequality, a large deviation bound for correlated random variables. We say a sequence of random variables $X_0, X_1, X_2, \dotsc$ is a supermartingale if for all $k\in \mathbb{N}$
\begin{equation}
\mathbb{E}\left( X_{k+1} | X_0, \dotsc, X_k \right) \leq X_k. 
\end{equation}
Then we have 
\begin{proposition} \label{azuma}
(Azuma's inequality) Let $\{ X_k \}_{k}$ be a supermartingale with $|X_{k+1} - X_k| \leq c_k$. Then for all positive integers $n$ 
and all positive reals $t$,
\begin{equation}
\Pr \left( X_n - X_0 \geq t  \right) \leq \exp \left(  - \frac{t^{2}}{2\left( \sum_{k=1}^{n} c_k^{2} \right)}  \right). 
\end{equation}
\end{proposition}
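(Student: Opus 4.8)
The plan is to establish Azuma's inequality by the exponential moment, or Chernoff, method. Write the increments $D_k := X_k - X_{k-1}$, so that the supermartingale hypothesis reads $\mathbb{E}(D_k \mid X_0, \dotsc, X_{k-1}) \leq 0$ and the bounded-difference hypothesis reads $|D_k| \leq c_k$. For any parameter $\lambda > 0$, Markov's inequality applied to the nonnegative random variable $\exp(\lambda(X_n - X_0))$ gives
\[
\Pr(X_n - X_0 \geq t) \leq e^{-\lambda t}\, \mathbb{E}\, \exp\!\big(\lambda(X_n - X_0)\big) = e^{-\lambda t}\, \mathbb{E} \prod_{k=1}^{n} e^{\lambda D_k}.
\]
The proof then reduces to bounding this product of exponentiated increments and optimising over $\lambda$ at the end.

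The key step is a Hoeffding-type bound on each conditional exponential moment: for every $k$,
\[
\mathbb{E}\big(e^{\lambda D_k} \mid X_0, \dotsc, X_{k-1}\big) \leq e^{\lambda^2 c_k^2/2}.
\]
To prove this I would fix the past and work with the conditional law of $D_k$, supported on $[-c_k, c_k]$ with conditional mean $\mu := \mathbb{E}(D_k \mid X_0, \dotsc, X_{k-1}) \leq 0$. Convexity of $y \mapsto e^{\lambda y}$ puts it below the chord through the endpoints $\pm c_k$, so for all $y \in [-c_k, c_k]$,
\[
e^{\lambda y} \leq \frac{c_k - y}{2 c_k}\, e^{-\lambda c_k} + \frac{c_k + y}{2 c_k}\, e^{\lambda c_k}.
\]
Taking the conditional expectation replaces $y$ by $\mu$ on the right-hand side; since that expression is increasing in $\mu$ (its $\mu$-derivative equals $\frac{1}{2 c_k}(e^{\lambda c_k} - e^{-\lambda c_k}) > 0$ for $\lambda > 0$), the hypothesis $\mu \leq 0$ lets me replace $\mu$ by $0$, leaving $\cosh(\lambda c_k) \leq e^{\lambda^2 c_k^2/2}$, the final inequality following from the termwise comparison $2^n n! \leq (2n)!$ of the two power series. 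This is precisely the point at which the supermartingale (rather than martingale) hypothesis enters: the monotonicity in $\mu$ turns the one-sided mean bound into an advantage.

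With the conditional estimate in hand I would peel off the factors of the product one at a time via the tower property of conditional expectation. Conditioning on $X_0, \dotsc, X_{n-1}$, the first $n-1$ factors are measurable and factor out, while the last is bounded by $e^{\lambda^2 c_n^2 / 2}$; iterating downwards gives
\[
\mathbb{E} \prod_{k=1}^{n} e^{\lambda D_k} \leq \prod_{k=1}^{n} e^{\lambda^2 c_k^2/2} = \exp\!\Big( \tfrac{\lambda^2}{2} \sum_{k=1}^{n} c_k^2 \Big).
\]
Substituting into the Markov bound yields $\Pr(X_n - X_0 \geq t) \leq \exp\!\big( -\lambda t + \tfrac{\lambda^2}{2} \sum_{k} c_k^2 \big)$ for every $\lambda > 0$, and optimising the quadratic exponent with the choice $\lambda = t / \sum_{k} c_k^2$ produces exactly the stated bound $\exp\!\big( -t^2 / (2 \sum_{k} c_k^2) \big)$. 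The main obstacle is the conditional Hoeffding estimate of the second paragraph, and in particular the correct treatment of the one-sided mean condition $\mu \leq 0$ coming from the supermartingale assumption rather than the equality $\mu = 0$ of the martingale case: one must verify that the chord bound, combined with monotonicity in $\mu$, collapses cleanly to $\cosh(\lambda c_k)$. The remaining ingredients — Markov's inequality, the tower property, and the final single-variable optimisation — are routine.
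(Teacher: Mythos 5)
Your proof is correct, but there is nothing in the paper to compare it against: the authors state Azuma's inequality as a classical result and use it as a black box in the proof of Lemma~\ref{lowerboundnorm}, giving no proof of their own. What you supply is the standard Chernoff--Hoeffding derivation, and you handle the one genuinely delicate point properly: since the chord bound is affine in $y$, conditional expectation evaluates it at the conditional mean $\mu$, and for $\lambda>0$ the expression is increasing in $\mu$, so the supermartingale condition $\mu\le 0$ collapses it to $\cosh(\lambda c_k)\le e^{\lambda^2 c_k^2/2}$ --- this is exactly where a one-sided hypothesis suffices in place of the martingale equality. The tower-property peeling (valid because $D_1,\dots,D_{n-1}$ are measurable with respect to $\sigma(X_0,\dots,X_{n-1})$) and the optimisation $\lambda = t/\sum_k c_k^2$ are routine and correctly executed. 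One cosmetic remark: the proposition as stated bounds $|X_{k+1}-X_k|\le c_k$ yet sums $c_k^2$ over $k=1,\dots,n$, which is internally inconsistent by one index; your implicit re-indexing to $|X_k-X_{k-1}|\le c_k$ is the intended reading and is harmless.
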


\begin{proof}[{\bf Proof of Lemma~\ref{lowerboundnorm}}] 
By Lemma \ref{jain}, there is $M_{AB} \in \Mclass$ such that 
\begin{equation} 
b = a + \frac 12 \Norm{\rho_{AB} - \CS_{A:B}}_{\Mclass}, \text{\,\, where \,\,} 
a:= \max_{\sigma_{AB} \in {\cal S}_{A:B}} \tr(M_{AB}\sigma_{AB}), \hspace{0.2 cm} b := \tr(M_{AB}\rho_{AB}),
\label{jainconsequence}
\end{equation}
and without loss of generality 
$\Norm{\rho_{AB} - \CS_{A:B}}_{\Mclass}>0$, or $b>a$.

The element $M_{A^nB^n} \in \Mclass$ is constructed as follows. Fix $\delta > 0$. One measures the POVM $\{ M_{AB}, \id - M_{AB} \}$ 
in each of the $n$ copies, obtaining $n$ binary random variables $\{ Z_i \}_{i=1}^{n}$, where $Z_i=1$ is associated with effect $M_{AB}$ and $Z_i=0$ with effect $\id-M_{AB}$. Then if 
\begin{equation}
\frac{1}{n} \sum_{i=1}^{n} Z_i \geq a + (1 - \delta)(b - a), 
\end{equation}
we accept (this corresponds to $M_{A^nB^n}$). Otherwise we reject. 

First, by the law of large numbers it is clear that 
\begin{equation}
\lim_{n \rightarrow \infty} \tr(M_{A^nB^n} \rho_{AB}^{\otimes n}) = 1. 
\end{equation}
We now show that
\begin{equation}
\lim_{n \rightarrow \infty} - \frac{\log \tr(M_{A^nB^n} \omega_{A^nB^n})}{n} \geq \frac{ (1 - \delta)^{2}}{8 \ln 2} \Norm{\rho_{AB} - \CS_{A:B}}_{\Mclass}^{2}. 
\end{equation}
The key observation is that, in case we have a separable state $\omega_{A^nB^n}$, for all $i \in \{ 1, \dotsc, n\}$,
\begin{equation}
\Pr(Z_i = 1 | Z_1, \dotsc, Z_{i-1}) \leq a.  
\end{equation}
This is so because by assumption $M_{AB} \in \SEP$ implies that both POVM elements in the measurement $\{ M_{AB}, \id - M_{AB} \}$ are separable operators. Therefore, the state to which the $k$-th measurement is applied is always separable, even conditioning on previous measurement outcomes. Then by Proposition~\ref{azuma},
\begin{equation}
\Pr \left( \frac{1}{n} \sum_{i=1}^{n} Z_i \geq a + (1 - \delta)(b - a) \right) \leq 2^{ - \frac{ (1 - \delta)^{2}}{2 \ln 2} n (b-a)^{2} }. 
\end{equation}
In more detail, define the random variables
\begin{equation}
X_k := \sum_{i=1}^{k}\left( Z_i - a \right). 
\end{equation}
Note that
\begin{eqnarray}
\mathbb{E} \left( X_k | X_1, \dotsc, X_{k-1}  \right) &=& \mathbb{E} \left( Z_{k} | Z_1, \dotsc, Z_{k-1}  \right) - a + X_{k-1}  \\ 
&=& \Pr \left( Z_{k} = 1 | Z_1, \dotsc, Z_{k-1}  \right) - a + X_{k-1}  \\ 
&\leq& X_{k-1}, 
\end{eqnarray}
i.e.\ $\{ X_k \}_{k}$ form a supermartingale. Moreover, for all $k$, $|X_k - X_{k-1}| = |Z_{k} - a| \leq 1$. Hence by Proposition \ref{azuma}, setting $X_0=0$, we have
\begin{eqnarray}
\Pr \left( \frac{1}{n} \sum_{i=1}^{n} Z_i \geq a + (1 - \delta)(b - a) \right) &=& \Pr \left( X_{n} \geq n (1 - \delta)(b - a) \right) \\ &\leq&  \exp \left( - \frac{(1 - \delta)^{2}}{2} n(b-a)^{2}\right) \\ &=& 2^{ - \frac{ (1 - \delta)^{2}}{2 \ln 2} n (b-a)^{2}}. 
\end{eqnarray}
From \eqref{jainconsequence},
\begin{equation}
- \frac{1}{n} \log \left( \Pr \left( \frac{1}{n} \sum_{i=1}^{n} Z_i \geq a + (1 - \delta)(b - a) \right) \right) \geq \frac{(1 - \delta)^{2}}{ 8 \ln 2}\Norm{\rho_{AB} - \CS_{A:B}}_{\Mclass}^{2}, 
\end{equation}
and thus
\begin{equation}
D_{\Mclass}(\rho_{A:B}) \geq \frac{(1 - \delta)^{2} }{8 \ln 2}\Norm{\rho_{AB} - \CS_{A:B}}_{\Mclass}^{2}. 
\end{equation}
The result then follows from the fact that $\delta > 0$ is arbitrary.
\end{proof}

\subsection{Proof of Theorem} \label{onetotwo}

\begin{proof}
The Theorem follows directly from Lemmas 1, 2, and 3. Indeed
\begin{equation}
I(A; B|E)_{\rho} \geq E_{R}^{\infty}(\rho_{A:BE}) - E_R^{\infty}(\rho_{A:E}) \geq D_{\rightLOCC}(\rho_{A:B}) \geq \frac{1}{2 \ln 2} \Vert \rho_{A:B} - {\cal S}_{A:B} \Vert_{\rightLOCC}^{2},
\end{equation}
where the first, second, and third inequalities follow from Lemmas 1, 2, and 3, respectively.
\end{proof}

\section{ Proofs of Corollaries} \label{twocorollaries}
We start with the proof of Corollary \ref{monogamy}, which is straightforward.

\begin{proof}[{\bf Proof of Corollary \ref{monogamy}}]

Squashed entanglement satisfies the monogamy inequality \cite{KW04}
\begin{equation}
E_{\rm sq}(\rho_{A:BB'}) \geq E_{\rm sq}(\rho_{A:B}) + E_{\rm sq}(\rho_{A:B'}).
\end{equation}
Repeatedly applying it to $\rho_{A:B_1 \cdots B_k}$ and noting that $E_{\rm sq}(\rho_{A:B_1 \cdots B_k}) \leq \log |A|$~\cite{CW04}, we get
\begin{equation}
\log|A| \geq E_{\rm sq}(\rho_{A:B_1...B_k}) \geq kE_{\rm sq}(\rho_{A:B}). 
\end{equation}
where we used that $\rho_{AB_j}=\rho_{AB}$ for all $j$. The result then follows from the lower bound for squashed entanglement given in Corollary \ref{corsquahsedlower}.
\end{proof}

Here, we give an outline of the proof of Corollary~\ref{complexity}, concerning the quasipolynomial-time algorithm for deciding separability.  A more careful treatment can be found in \cite{BCY11}. 
\begin{proof}[{\bf Proof of Corollary \ref{complexity}}]

Given a density matrix $\rho_{AB}$, we use semidefinite programming \cite{VB96} in order to search for a $\left\lceil \frac{8 \ln2 \log|A|}{\epsilon^{2}}\right\rceil$-extension of $\rho_{A:B}$. If $\rho_{AB} \in \CS_{A:B}$, then there is such an extension (since separable states have a $k$-extension for every $k$). If $\Norm{\rho_{AB} - \CS_{A:B}}_\LOCC \geq \ep$, Corollary \ref{monogamy} implies that there is no such extension.

The computational cost of solving a feasibility semidefinite problem with $n$ variables and a matrix of dimension $m$ is $O(n^{2}m^{2})$. A $k$-extension of $\rho_{AB}$ has dimension $|A||B|^{k}$ and less than $|A|^{2}|B|^{2k}$ variables. Hence the time complexity of searching for a $\left\lceil \frac{8 \ln2 \log|A|}{\epsilon^{2}}\right\rceil$-extension is of order $\exp \left( O(\epsilon^{-2} \log|A| \log|B|)\right)$.
\end{proof}


\begin{proof}[{\bf Proof of Corollary \ref{QMA2}}]

We start by proving \eqref{qma2exact}. Consider a protocol in $\QMA_{\LOCC}(2)_{m, s, c}$ given by the one-way LOCC measurement $\{  M_{AB}, \id - M_{AB}\}$. We construct a protocol in $\QMA_{O(m^{2}\epsilon^{-2}), s + \epsilon, c}$ that can simulate it: The verifier asks for a proof of the form $\ket{\psi}_{AB_1\cdots B_k}$ where $|A| = |B_i| = 2^{m}$ for all $i$ (each register consists of $m$ qubits) and $k = \lceil 8 \ln2 \epsilon^{-2} m \rceil $. He then symmetrises the $B$ systems obtaining the state $\rho_{AB_1\cdots B_k}$ and measures $\{ M_{AB}, \id - M_{AB}\}$ in the subsystems $AB\equiv AB_1$. 

Let us analyse the completeness and soundness of the protocol. For completeness, the prover can send $\ket{\psi}_{A} \otimes \ket{\phi}_B^{\otimes k}$, for states $\ket{\psi}, \ket{\phi}$ such that
\begin{equation}
\tr \proj{\psi}_A \otimes \proj{\phi}_B M_{AB} \geq c.
\end{equation}
Thus the completeness parameter of the $\QMA$ protocol is at least $c$.

For soundness, we note that by Corollary \ref{monogamy}, 
\begin{equation} \label{eqclose}
\Norm{\rho_{AB} - \CS_{A:B}}_{\LOCC} \leq \epsilon.
\end{equation}
Thus, as $\{ M_{AB}, \id - M_{AB}\} \in \leftrightLOCC$ the soundness parameter for the $\QMA$ protocol can only be $\epsilon$ away from $s$. Indeed, for every $\rho_{AB_1\cdots B_k}$ symmetric in $B$,
\begin{equation}
\tr(M\rho_{AB}) \leq \max_{\sigma_{AB} \in {\cal S}_{A:B}} \tr(M_{AB} \sigma_{AB}) +  \Norm{\rho_{AB} - \CS_{A:B}}_{\LOCC} \leq s + \epsilon.
\end{equation}
\eqref{QMA2LOCCequalQMA} now follows easily from the protocol above. Given a protocol in $\QMA_{'\LOCC}(l)$ with each proof of size $m$ qubits, we can simulate it in $\QMA_{\LOCC}(l-1)$ as follows: The verifier asks for $l-1$ proofs, the first proof consisting of registers $AB_1\cdots B_k$, each of size  $m$ qubits and $k =\lceil  8 \ln2 m \epsilon^{-2}\rceil $, and all the $l-2$ other proofs in systems $C_j$ ($3\leq j\leq l$) of size $m$ qubits. Then he symmetrises the B systems and traces out all of them except the first. Finally he applies the original measurement from the $\QMA_{\LOCC}(l)$ protocol to the resulting state.

The completeness of the protocol is unaffected by the simulation. For the soundness let $\rho_{AB_1\cdots B_k} \otimes \rho_{C_3} \otimes \cdots \otimes \rho_{C_l}$ be an arbitrary state sent by the prover (after the symmetrisation step has been applied to $B_1, \dotsc, B_m$). Let $\{ M_{C^l}, \id - M_{C^l}  \} \in \LOCC$ be the $l$-partite LOCC verification measurement from the $\QMA_{\LOCC}(l)$ protocol (where for notational simplicity we define $C_1=A$, $C_2=B_1$ and $C^l=C_1\cdots C_l$). Then
\begin{eqnarray}
\tr \left(M_{C^l} (\rho_{C_1C_2}\otimes \rho_{C_3} \otimes \cdots \otimes \rho_{C_l}) \right) 
&\leq& \max_{\sigma_{C^l} \in {\cal S}_{C_1:\cdots:C_l}} \tr(M_{C^l} \sigma_{C^l})\\
&& \qquad + \min_{\sigma_{C^l} \in {\cal S}_{C_1:\cdots:C_l}} \Vert \rho_{C_1C_2} \otimes \rho_{C_3} \otimes \cdots \otimes \rho_{C_l} - \sigma_{C^l} \Vert_{\LOCC} \nn \\ 
&=& \max_{\sigma_{C^l} \in {\cal S}_{C_1:\cdots:C_l}} \tr(M_{C^l} \sigma_{C^l}) \\
&&  + \min_{\sigma_{C_1C_2} \in {\cal S}_{C_1:C_2}} \Vert \rho_{C_1C_2} - \sigma_{C_1C_2} \Vert_{\LOCC} \nn \\ &\leq& s + \epsilon,
\end{eqnarray}
The equality in the second line follows since we can assume that the states $\rho_{C_3}, \cdots , \rho_{C_l}$ belong to Alice and adding local states does not change the minimum one-way LOCC-distance to separable states. 

Since for going from $\QMA_{\LOCC}(l)$ to $\QMA_{\LOCC}(l-1)$ we had to blow up one of the proof's size only by a quadratic factor, we can repeat the same protocol a constant number of times and still get each proof of polynomial size. In the end, the completeness parameter of the $\QMA$ procedure is the same as the original one for $\QMA_{\LOCC}(l)$, while the soundness is smaller than $s + l \epsilon$, which can be taken to be a constant away from $c$ by choosing $\epsilon$ sufficiently small. To reduce the soundness back to the original value $s$ we then use the standard amplification procedure for $\QMA$ (see e.g.\ \cite{Wat08}), which works in this case since the verification measurement is LOCC \cite{ABDFS08}. 
\end{proof}

\section{Acknowledgments}

We thank Mario Berta, Aram Harrow, David Reeb and Andreas Winter for helpful discussions. We are specially grateful to Ke (Carl) Li and Andreas Winter for pointing out an error in a early version of the paper (namely in the recursion argument for boosting the results from one-way LOCC to general LOCC measurements, as claimed originally). FB is supported by a "Conhecimento Novo" fellowship from the Brazilian agency Fundac\~ao de Amparo a Pesquisa do Estado de Minas Gerais (FAPEMIG). MC is supported by the Swiss National Science Foundation (grant PP00P2-128455) and the German Science Foundation (grants CH 843/1-1 and CH 843/2-1). JY is supported by a grant through the LDRD program of the United States Department of Energy. FB and JY thank the Institute Mittag Leffler, where part of this work was done, for their hospitality.

\bibliographystyle{unsrt}

\end{document}